\providecommand{\tabularnewline}{\\}
 \definecolor{BLACK}{gray}{0}
 \definecolor{WHITE}{gray}{1}
 \definecolor{RED}{rgb}{1,0,0}
 \definecolor{GREEN}{rgb}{0,1,0}
 \definecolor{BLUE}{rgb}{0,0,1}
 \definecolor{CYAN}{cmyk}{1,0,0,0}
 \definecolor{MAGENTA}{cmyk}{0,1,0,0}
 \definecolor{YELLOW}{cmyk}{0,0,1,0}
  \theoremstyle{remark}
    \newtheorem{rem}{\protect\remarkname}
    \newtheorem{rem}{\protect\remarkname}[chapter]
  \theoremstyle{remark}
  \newtheorem*{rem*}{\protect\remarkname}
  \theoremstyle{plain}
    \newtheorem{thm}{\protect\theoremname}
    \newtheorem{thm}{\protect\theoremname}[chapter]
  \theoremstyle{plain}
    \newtheorem{lem}{\protect\lemmaname}
    \newtheorem{lem}{\protect\lemmaname}[chapter]
  \theoremstyle{plain}
    \newtheorem{conjecture}{\protect\conjecturename}
    \newtheorem{conjecture}{\protect\conjecturename}[chapter]
  \providecommand{\conjecturename}{Conjecture}
  \providecommand{\lemmaname}{Lemma}
  \providecommand{\remarkname}{Remark}
  \providecommand{\theoremname}{Theorem}
\begin{document}

\title{Entanglement and correlation functions of the quantum Motzkin spin-chain}

\author{Ramis Movassagh }

\email{q.eigenman@gmail.com}

\selectlanguage{english}%

\affiliation{Department of Mathematics, IBM T. J . Watson Research Center,
Yorktown Heights, NY 10598}

\date{\today}

\maketitle
We present exact results on the exactly solvable spin chain of Bravyi
et al {[}Phys. Rev. Lett. 109, 207202 (2012){]}. This model is a spin
one chain and has a Hamiltonian that is local and translationally
invariant in the bulk. It has a unique (frustration free) ground state
with an energy gap that is polynomially small in the system's size
($2n$). The half-chain entanglement entropy of the ground state is
$\frac{1}{2}\log n+const.$ \cite{Movassagh2012_brackets}. Here we
first write the Hamiltonian in the standard spin-basis representation.
We prove that at zero temperature, the magnetization is along the
$z-$direction i.e., $\langle s^{x}\rangle=\langle s^{y}\rangle=0$
(everywhere on the chain). We then analytically calculate $\langle s^{z}\rangle$
and the two-point correlation functions of $s^{z}$. By analytically
diagonalizing the reduced density matrices, we calculate the Schmidt
rank, von Neumann and R\'enyi entanglement entropies for: 1. Any partition
of the chain into two pieces (not necessarily in the middle) and 2.
$L$ consecutive spins centered in the middle. Further, we identify
entanglement Hamiltonians (Eqs. \eqref{eq:Hcut} and \eqref{eq:H_L}).
We prove a small lemma (Lemma \eqref{lem:GeneralizedBallot}) on the
combinatorics of lattice paths using the reflection principle to relate
and calculate the Motzkin walk 'height' to spin expected values. We
also calculate the, closely related, (scaled) correlation functions
of Brownian excursions. The known features of this model are summarized
in a table in Sec. \ref{sec:Context-and-summary}. 
\tableofcontents{}
\section{\label{sec:Context-and-summary}Context and summary of the results}

In recent times, existence and quantification of long range entanglement
in physical systems as a way of probing quantum phases of matter has
gained much attention \cite{facchi2015large,sachdev2007quantum,Korepin2004}.
Substantial amount of entanglement in the ground state may be utilized
to achieve quantum processing tasks such as spin state transfer \cite{pyrkov2014quantum}.
Most quantum interactions are local and a local Hamiltonian is frustration
free (FF) if the ground state is also the ground state of every one
of the terms in the interaction (summands). Such Hamiltonians afford
mathematical amenities that enable extraction of rich physics \cite{AKLT},
provide certain inherent stabilities against perturbations \cite{Kraus2008},
and their ground states can be engineered by dissipation \cite{Verstraete2009}. 

From a computer science and quantum complexity perspective, FF Hamiltonians
provide a natural bridge to physics where projectors that model the
local interactions are analogous to constraints in conventional satisfiability
problems \cite{gharibian2014quantum,Bravyi06}. It is interesting
to ask, how rich and entangled can FF quantum many-body systems be?
Much is known about local FF quantum spin$-1/2$ chains. For example
the ground state entanglement entropy is zero \cite{chen2011no} and
their energy gap has been classified \cite{Bravyi_Gosset2015}. In
general less is known for higher spin models. Recently it was shown
that in local FF systems, the gap $\Delta$ and the correlation length
$\xi$ are related by $\xi=\mathcal{O}\left(\Delta^{-1/2}\right)$
and that this bound is tight \cite{gosset2015correlation}. Moreover,
local generic FF spin chains with spin values $s\ge3/2$ are known
to have highly entangled and highly degenerate ground states \cite{Movassagh2010PRA}.
 It is natural then to investigate the properties of FF spin$-1$
chains. Well-known (non-critical) examples of these include the Heisenberg
ferromagnetic chain \cite{Koma95}, the AKLT model \cite{AKLT}, and
parent Hamiltonians of matrix product states \cite{Naechtaegale1992,cirac}.

Bravyi et al \cite{Movassagh2012_brackets} proposed a spin$-1$ model
that has a unique FF ground state, whose half-chain entanglement entropy
is $S=\frac{1}{2}\log n+c$, where $2n$ is the number of spins on
the chain. The Hamiltonian is local, translationally invariant in
the bulk, and has an energy gap to the first excited state that is
polynomially small in the size of the system. Despite the logarithmic
divergence of $S$ with $n$, this model was proved not to be described
by a conformal field theory (CFT) in the continuum limit \cite{movassagh2016supercritical}.\\

This work provides a more complete picture of this model. We take
the length of the chain to be $2n$; a chain with an odd number of
sites is done similarly. The way by which we take the limits and enforce
$L\ll n$ is explained in Sec. \ref{sub:Discussion-of-limits}. In
the following table, we exclude certain mathematical results of this
paper (e.g. Eqs. \eqref{eq:E_m2_Excur}--\eqref{eq:Connected_Physical})
as they are less relevant for the physics of the model. The table
below summarizes what is now known about this model with References
for finding the Results corresponding to any given Feature.\\
\begin{center}
\begin{tabular}{|c|c|c|}
\hline 
\multicolumn{1}{|c|}{\textsf{\textbf{Features}}} & \textsf{\textbf{Results}} & \textsf{\textbf{References}}\tabularnewline
\hline 
\hline 
\multirow{2}{*}{The Hamiltonian} & Local, translationally invariant in the bulk & \multirow{2}{*}{\cite{Movassagh2012_brackets}}\tabularnewline
 & Has boundary projectors, frustration free. & \tabularnewline
\hline 
The Hamiltonian in spin-representation & See the section & Eqs. \eqref{eq:HamiltonianSpinRep} -- \eqref{eq:BoundarySpinRep}\tabularnewline
Hamiltonian symmetry & $U(1)$ & Sec. (\ref{sub:Hamiltonian-in-spin-operator})\tabularnewline
\hline 
\multirow{2}{*}{Ground state is the Motzkin state: $|{\cal M}_{2n}\rangle$} & \multirow{2}{*}{Unique and frustration free} & \multirow{2}{*}{\cite{Movassagh2012_brackets}}\tabularnewline
 &  & \tabularnewline
\hline 
\multirow{2}{*}{The energy gap} & $\Theta\left(n^{-c}\right)$, $ $ Numerics indicate $c=3$  & \cite{Movassagh2012_brackets}\tabularnewline
 & Provably : $c\ge2$  &  \cite{movassagh2016supercritical}\tabularnewline
\hline 
Is the model describable by a (relativistic) CFT ? & No & \cite{movassagh2016supercritical}\tabularnewline
\hline 
\multirow{2}{*}{Expected Motzkin Height at $1<n_{1}<2n$} & \multirow{2}{*}{$\langle\widehat{m}_{n_{1}}\rangle=\frac{4}{\sqrt{3\pi}}\sqrt{n_{1}\left(1-n_{1}/2n\right)}$} & \multirow{2}{*}{Eqs. \eqref{eq:Magnetization} and \eqref{eq:E_m_Excur}}\tabularnewline
 &  & \tabularnewline
\hline 
Motzkin Height 2-point function with $1\ll L\ll n$ & $\langle\widehat{m}_{n-\frac{L}{2}}\widehat{m}_{n+\frac{L}{2}}\rangle=n-\frac{L}{3}+\frac{L^{2}}{4n}$ & Eq. \eqref{eq:2Pt_Final}\tabularnewline
\hline 
\multirow{2}{*}{Expected magnetization} & $\langle s^{x}\rangle=\langle s^{y}\rangle=0$ & Lemma \eqref{lem:SxSy}\tabularnewline
 & $\langle s_{n_{1}}^{z}\rangle=\frac{2}{\sqrt{3\pi}}\frac{\left(1-n_{1}/n\right)}{\sqrt{n_{1}\left(1-n_{1}/2n\right)}}$,
$\quad$ $1\ll n_{1}\ll2n$  & Eq. \eqref{eq:Sz_mean}\tabularnewline
\hline 
\multirow{2}{*}{$s^{z}$ two point function $1<n_{1}<n_{2}<2n$} & \multirow{2}{*}{$\langle s_{n_{1}}^{z}s_{n_{2}}^{z}\rangle=0$} & \multirow{2}{*}{Eqs. \eqref{eq:2pt_spinCorr} and \eqref{eq:SzSz_Excursion}}\tabularnewline
 &  & \tabularnewline
\hline 
Bipartite Schmidt rank & $\chi_{n_{1}}=\min\left\{ n_{1},2n-n_{1}\right\} +1$ & Eq. \eqref{eq:chi_n1}\tabularnewline
Bipartite von Neumann entropy about $1<n_{1}<2n$ & $S_{\mbox{cut}}=\frac{1}{2}\log\left[\frac{n_{1}\left(2n-n_{1}\right)}{n}\right]+const.$ & Eq. \eqref{eq:Scut}\tabularnewline
Bipartite R\'enyi entropy about $1<n_{1}<2n$ & $S_{\mbox{cut}}^{\kappa}=\frac{1}{2}\log\left[\frac{n_{1}\left(2n-n_{1}\right)}{n}\right]+f(\kappa)$ & Eq. \eqref{eq:Scut_Renyi}\tabularnewline
\hline 
Schmidt rank of $\mbox{ }1\ll L\ll n\mbox{ }$ middle spins & $\chi_{L}=2L+1$ & Eq. \eqref{eq:chi_L}\tabularnewline
von Neumann entropy of $\mbox{ }1\ll L\ll n\mbox{ }$ middle spins & $S_{L}=\frac{1}{2}\log(L)+const.$ & Eq. \eqref{eq:Entropy_Final}\tabularnewline
R\'enyi entropy of $\mbox{ }1\ll L\ll n\mbox{ }$ middle spins & $S_{L}^{\kappa}=\frac{1}{2}\log(L)+g(\kappa)$ & Eq. \eqref{eq:SL_Renyi}\tabularnewline
\hline 
\end{tabular}\\
\par\end{center}
\begin{rem}
The new results listed in the table above, which are analytically
derived below, give good agreements with numerical density matrix
renormalization group (DMRG) calculations for $2n=96$ already \cite{AdrianChat}. 
\end{rem}
\section{\label{sec:The-local-Hamiltonian}The ground state and the local
Hamiltonian }
\subsection{The unique ground state}
Let us consider a spin$-1$ chain of length $2n$. An odd size chain
is done similarly. It is convenient to label the $d=3$ spin states
by $\left\{ 0,u,d\right\} $ where $0$ means a flat step, $u$ means
a step up and $d$ a step down. A Motzkin walk on $2n$ steps is any
walk from coordinates $\left(x,y\right)=\left(1,0\right)$ to $\left(x,y\right)=\left(2n,0\right)$
where at any intermediate step the coordinates $(x,y)$ can only change
to $\left(x+1,y\right)$, $\left(x+1,y+1\right)$ or $\left(x+1,y-1\right)$
with the walk not passing below the x-axis, i.e., $y\ge0$ everywhere
on the walk. One makes the following identifications for the spin
states: $|d\rangle=|-1\rangle$, $|u\rangle=|+1\rangle$, and $|0\rangle$
is self-identified. 

The unique ground state is the Motzkin \textit{state,} which is defined
to be the uniform superposition of all Motzkin walks on $2n$ steps
\cite{Movassagh2012_brackets}. We denote the Motzkin state by $|{\cal M}_{2n}\rangle$,
which mathematically reads 
\begin{equation}
|{\cal M}_{2n}\rangle=\frac{1}{\sqrt{N}}\sum_{s\in\mbox{Motzkin}}|s\rangle\label{eq:GS}
\end{equation}
where $N$ is the total number of Motzkin walks on $2n$ steps. See
Fig. \eqref{fig:Examples-of-MotzkinState} for examples of the Motzkin
State $|{\cal M}_{2n}\rangle$. 
\begin{figure}
\begin{centering}
\includegraphics[scale=0.38]{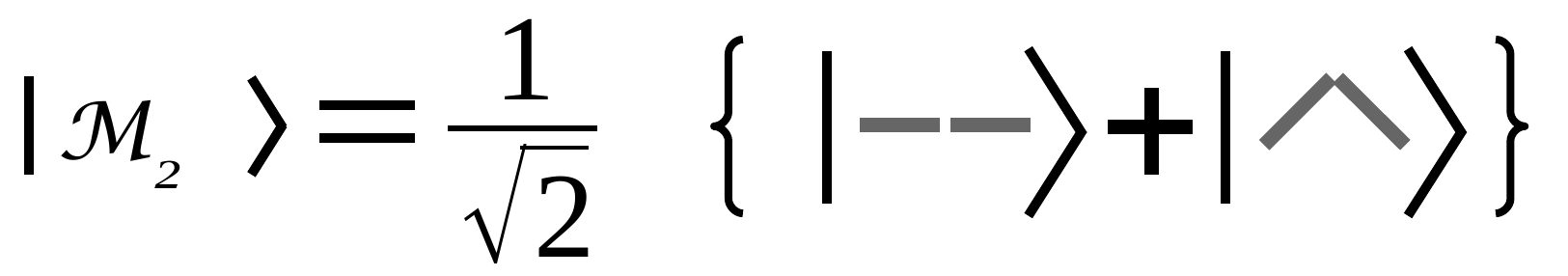}
\par\end{centering}
\begin{centering}
\includegraphics[scale=0.35]{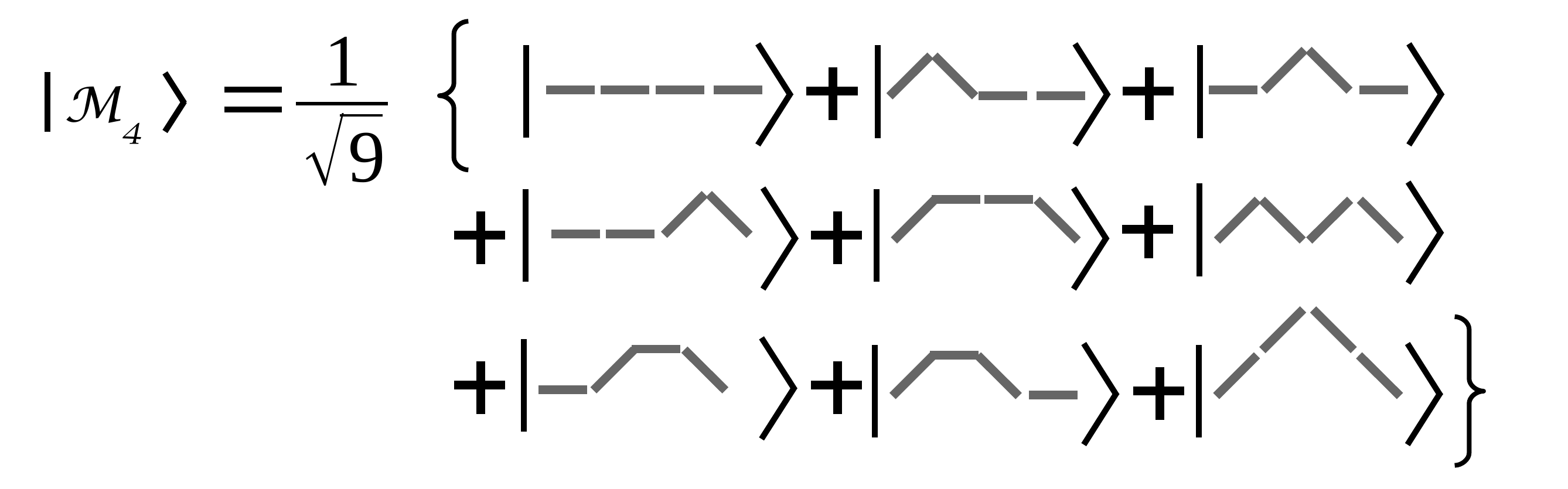}\caption{\label{fig:Examples-of-MotzkinState}Examples of the ground state
(Motzkin State) for two chains of different size.}
\par\end{centering}
\end{figure}
\subsection{\label{sub:Hamiltonian-in-spin-operator}Hamiltonian in spin-operator
representation and its symmetry}
In \cite{Movassagh2012_brackets}, it was shown that $|{\cal M}_{2n}\rangle$
is the unique and frustration free ground state of a local Hamiltonian,
which we now review. Consider the following local operations on any
Motzkin walk: interchanging zero with a non-flat step (i.e., $0d\leftrightarrow d0$
or $0u\leftrightarrow u0$) or interchanging a consecutive pair of
zeros with a peak (i.e., $00\leftrightarrow ud$). Any Motzkin walk
can be obtained from another one by a sequence of these local changes.
To construct a local Hamiltonian with projectors as interactions that
has the uniform superposition of the Motzkin walks as its zero energy
ground state, each of the local terms of the Hamiltonian has to annihilate
states that are symmetric under these interchanges.

The local Hamiltonian is \cite{Movassagh2012_brackets}
\begin{equation}
H=\sum_{j=1}^{2n-1}\Pi_{j,j+1}+\Pi_{boundary},\label{eq:H}
\end{equation}
where $\Pi_{j,j+1}$ implements the local changes discussed above
and is defined by 
\[
\Pi_{j,j+1}\equiv|D\rangle_{j,j+1}\langle D|+|U\rangle_{j,j+1}\langle U|+|\varphi\rangle_{j,j+1}\langle\varphi|
\]
where $|D\rangle=\frac{1}{\sqrt{2}}\left[|0d\rangle-|d0\rangle\right]$,
$|U\rangle=\frac{1}{\sqrt{2}}\left[|0u\rangle-|u0\rangle\right]$
and $|\varphi\rangle=\frac{1}{\sqrt{2}}\left[00\rangle-\mbox{ }|ud\rangle\right]$.
The boundary term $\Pi_{boundary}\equiv\left[|d\rangle_{1}\langle d|+|u\rangle_{2n}\langle u|\right]$
selects out the Motzkin state as the only ground state.

In the spin operator representation, taking $\hbar=1$ \cite{landau2013quantum}
\footnote{Only in this section $m\in\{-1,0,+1\}$ denotes the spin state; it
has nothing to do with the height of a walk that appears in the other
(sub)sections.}:
\begin{eqnarray*}
s^{2}|s,m\rangle & = & s\left(s+1\right)\mbox{ }|s,m\rangle\\
s^{z}|s,m\rangle & = & m\mbox{ }|s,m\rangle\\
S^{\pm}|s,m\rangle & = & \sqrt{s\left(s+1\right)-m\left(m\pm1\right)}\mbox{ }|s,m\pm1\rangle
\end{eqnarray*}
where $S^{\pm}=s^{x}\pm is^{y}$. Below we drop $s$-dependence as
it is always equal to one, and denote the ket $|s,m\rangle$ simply
by $|m\rangle$. For example the state of two consecutive spins being
$|0,1\rangle_{j,j+1}$ really means $|m=0\rangle_{j}\otimes|m=1\rangle_{j+1}$. We identify
\[
|u\rangle\equiv|+1\rangle=\left(\begin{array}{c}
1\\
0\\
0
\end{array}\right),\quad|0\rangle=\left(\begin{array}{c}
0\\
1\\
0
\end{array}\right),\quad|d\rangle\equiv|-1\rangle=\left(\begin{array}{c}
0\\
0\\
1
\end{array}\right)\quad.
\]
In this basis \cite{landau2013quantum} 
\[
s^{x}=\frac{1}{\sqrt{2}}\left(\begin{array}{ccc}
0 & 1 & 0\\
1 & 0 & 1\\
0 & 1 & 0
\end{array}\right),\quad s^{y}=\frac{1}{\sqrt{2}}\left(\begin{array}{ccc}
0 & -i & 0\\
i & 0 & -i\\
0 & i & 0
\end{array}\right),\quad s^{z}=\left(\begin{array}{ccc}
1 & 0 & 0\\
0 & 0 & 0\\
0 & 0 & -1
\end{array}\right),
\]
and
\begin{align}
s^{x}\mbox{ }|0\rangle & =\frac{1}{\sqrt{2}}\left[|-1\rangle+|+1\rangle\right],\quad s^{x}|+1\rangle=\frac{1}{\sqrt{2}}|0\rangle,\quad s^{x}|-1\rangle=\frac{1}{\sqrt{2}}|0\rangle\mbox{ },\label{eq:Sx}\\
s^{y}\mbox{ }|0\rangle & =\frac{i}{\sqrt{2}}\left[|-1\rangle-|+1\rangle\right],\quad s^{y}|+1\rangle=\frac{i}{\sqrt{2}}|0\rangle,\quad s^{y}|-1\rangle=\frac{-i}{\sqrt{2}}|0\rangle\mbox{ },\label{eq:Sy}\\
s^{z}\mbox{ }|0\rangle & =0,\quad s^{z}\mbox{ }|+1\rangle=|+1\rangle,\quad s^{z}\mbox{ }|-1\rangle=-|-1\rangle\mbox{ }.\label{eq:Sz}
\end{align}
The local terms on any pair of nearest neighbor spins become
\begin{eqnarray*}
|U\rangle\langle U| & = & \frac{1}{2}\left\{ \left(|0,1\rangle-|1,0\rangle\right)\left(\langle0,1|-\langle1,0|\right)\right\} \\
 & = & \frac{1}{2}\left\{ \left(|0\rangle\langle0|\otimes|1\rangle\langle1|-|1\rangle\langle0|\otimes|0\rangle\langle1|-|0\rangle\langle1|\otimes|1\rangle\langle0|+|1\rangle\langle1|\otimes|0\rangle\langle0|\right)\right\} ,\\
|D\rangle\langle D| & = & \frac{1}{2}\left\{ \left(|0,-1\rangle-|-1,0\rangle\right)\left(\langle0,-1|-\langle-1,0|\right)\right\} \\
 & = & \frac{1}{2}\left\{ \left(|0\rangle\langle0|\otimes|-1\rangle\langle-1|-|-1\rangle\langle0|\otimes|0\rangle\langle-1|-|0\rangle\langle-1|\otimes|-1\rangle\langle0|+|-1\rangle\langle-1|\otimes|0\rangle\langle0|\right)\right\} \\
|\varphi\rangle\langle\varphi| & = & \frac{1}{2}\left\{ \left(|00\rangle-|1,-1\rangle\right)\left(\langle00|-\langle1,-1|\right)\right\} \\
 & = & \frac{1}{2}\left\{ \left(|0\rangle\langle0|\otimes|0\rangle\langle0|-|1\rangle\langle0|\otimes|-1\rangle\langle0|-|0\rangle\langle1|\otimes|0\rangle\langle-1|+|1\rangle\langle1|\otimes|-1\rangle\langle-1|\right)\right\} .
\end{eqnarray*}
Further simplification gives $\Pi_{j,j+1}$
\begin{eqnarray}
\Pi_{j,j+1} & = & \frac{1}{2}\left\{ |0\rangle_{j}\langle0|\otimes\mathbb{I}_{j+1}+|1\rangle_{j}\langle1|\otimes\left\{ |0\rangle_{j+1}\langle0|+|-1\rangle_{j+1}\langle-1|\right\} +|-1\rangle_{j}\langle-1|\otimes|0\rangle_{j+1}\langle0|\right\} \label{eq:HamiltonianSpinRep}\\
 & - & \frac{1}{2}\left\{ |1\rangle_{j}\langle0|\otimes|0\rangle_{j+1}\langle1|+|-1\rangle_{j}\langle0|\otimes|0\rangle_{j+1}\langle-1|+|1\rangle_{j}\langle0|\otimes|-1\rangle_{j+1}\langle0|+h.c.\right\} \nonumber 
\end{eqnarray}
\begin{singlespace}
From the action of the spin operators on the states, it is easy to
see that 
\[
\begin{array}{ccccccc}
|0\rangle\langle0| & = & \mathbb{I}-\left(s^{z}\right)^{2} & \quad & |-1\rangle\langle-1| & = & \frac{1}{2}\left(\mathbb{I}-s^{z}\right)s^{z}\\
|1\rangle\langle1| & = & \frac{1}{2}\left(\mathbb{I}+s^{z}\right)s^{z} &  & |0\rangle\langle-1| & = & \frac{1}{2\sqrt{2}}S^{+}\left(\mathbb{I}-s^{z}\right)s^{z}\\
|0\rangle\langle1| & = & \frac{1}{2\sqrt{2}}S^{-}\left(\mathbb{I}+s^{z}\right)s^{z} &  & |-1\rangle\langle0| & = & \frac{1}{2\sqrt{2}}s^{z}\left(\mathbb{I}-s^{z}\right)S^{-}\\
|1\rangle\langle0| & = & \frac{1}{2\sqrt{2}}s^{z}\left(\mathbb{I}+s^{z}\right)S^{+}
\end{array}
\]
These can be plugged into the right hand side of the equation above
to obtain $\Pi_{j,j+1}$. The boundary projector reads
\begin{equation}
\Pi_{boundary}=\frac{1}{2}\left(\mathbb{I}_{1}-s_{1}^{z}\right)s_{1}^{z}+\frac{1}{2}\left(\mathbb{I}_{2n}+s_{2n}^{z}\right)s_{2n}^{z}.\label{eq:BoundarySpinRep}
\end{equation}
Eqs. \eqref{eq:HamiltonianSpinRep}--\eqref{eq:BoundarySpinRep} can now
be plugged into Eq. \eqref{eq:H} to fully express it in terms of spin
operators.
\end{singlespace}
\begin{rem*}
\begin{singlespace}
The model has a $U(1)$ symmetry. Moreover, $\widehat{m}_{2n}\equiv\sum_{j=1}^{2n}s_{j}^{z}$
commutes with the Hamiltonian and is therefore a conserved quantity.\end{singlespace}
\end{rem*}
\section{Correlation functions}
Let $n_{1}$ and $n_{2}$ be two sites on the chain such that $1<n_{1}<n_{2}<2n$.
At zero temperature, one defines the correlations in the ground state
by
\begin{eqnarray*}
\langle s_{n_{1}}^{z}\rangle & \equiv & \langle{\cal M}_{2n}|s_{n_{1}}^{z}\mbox{ }|{\cal M}_{2n}\rangle\\
\langle s_{n_{1}}^{z}s_{n_{2}}^{z}\rangle & \equiv & \langle{\cal M}_{2n}|s_{n_{1}}^{z}s_{n_{2}}^{z}\mbox{ }|{\cal M}_{2n}\rangle
\end{eqnarray*}
where $s^{z}$ on any site can be written in terms of $|u\rangle$,
$|d\rangle$ and $|0\rangle$ by $s^{z}=\mbox{ }|u\rangle\langle u|\mbox{ }-\mbox{ }|d\rangle\langle d|$.
\begin{figure}
\centering{}\includegraphics[scale=0.4]{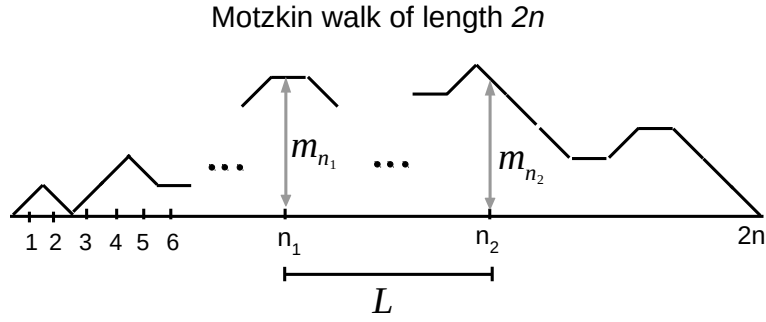}\caption{\label{fig:Geometry-of-the}Geometry of a Motzkin walk and height
correlations.}
\end{figure}
Natural and related combinatorial problems are the height and height-height
correlation functions. Suppose the height operator at the site $n_{1}$
is $\widehat{m}_{n_{1}}$ and at $n_{2}$ is $\widehat{m}_{n_{2}}$.
One defines these correlation functions by
\begin{eqnarray*}
\langle\widehat{m}_{n_{1}}\rangle & \equiv & \langle{\cal M}_{2n}|\widehat{m}_{n_{1}}|{\cal M}_{2n}\rangle\\
\langle\widehat{m}_{n_{1}}\widehat{m}_{n_{2}}\rangle & \equiv & \langle{\cal M}_{2n}|\widehat{m}_{n_{1}}\widehat{m}_{n_{2}}|{\cal M}_{2n}\rangle\mbox{ },
\end{eqnarray*}
where
\begin{equation}
\widehat{m}_{n_{1}}=\sum_{j=1}^{n_{1}}s_{j}^{z}\mbox{ },\qquad\widehat{m}_{n_{2}}=\sum_{j=1}^{n_{2}}s_{j}^{z}\mbox{ },\label{eq:mTozMap}
\end{equation}
which relate heights to $s^{z}$. See Fig. \eqref{fig:Geometry-of-the}
for the geometry. Since $\widehat{m}_{n_{1}}-\widehat{m}_{(n_{1}-1)}=s_{n_{1}}^{z}$,
the height correlation functions are the generators of the spin correlation
functions. Similarly, the expected bivariate difference equation is the two point function. Below we calculate the exact values of $\langle s_{n_{1}}^{z}\rangle $ and $\langle s_{n_{1}}^{z}s_{n_{2}}^{z}\rangle $ and show that the spin expected values can be obtained from the height expectations by differentiations  $\Theta(1/n)$ errors (see Eqs. \eqref{eq:Magnetization2}-\eqref{eq:Sz_mean}, and  \eqref{eq:bivariateTwopoint}-\eqref{eq:2pt_spinCorr}):
\begin{eqnarray*}
\langle s_{n_{1}}^{z}\rangle & = & \frac{\partial\langle\widehat{m}_{n_{1}}\rangle}{\partial n_{1}}+\Theta(1/n),\\
\langle s_{n_{1}}^{z}s_{n_{2}}^{z}\rangle & = & \frac{\partial^{2}\langle\widehat{m}_{n_{1}}\widehat{m}_{n_{2}}\rangle}{\partial n_{1}\mbox{ }\partial n_{2}}+\Theta(1/n).
\end{eqnarray*}
\begin{rem}
It would be interesting to find that $\langle s_{n_{1}}^{z}\rangle\ne0$.
In particular, we have that $\langle s_{1}^{z}\rangle>0$ and $\langle s_{2n}^{z}\rangle<0$
because of the non-negative constraint imposed by Motzkin walks.
\end{rem}
\begin{rem}
In physics one usually expects $\langle s_{n_{1}}^{z}s_{n_{2}}^{z}\rangle=G(|n_{2}-n_{1}|).$
Moreover, it is expected that in the asymptotic limit $G(|n_{2}-n_{1}|)\sim|n_{2}-n_{1}|^{\theta}$.
In Sec. \ref{sub:Two-point-functions} we shall see that $\theta=0$. 
\end{rem}
The thermal one-point and two-point correlation functions are 
\begin{eqnarray*}
\langle s_{n_{1}}^{z}\rangle_{T} & \equiv & \frac{1}{Z(\beta)}\mbox{Tr}(s_{n_{1}}^{z}\mbox{ }e^{-\beta H})=\frac{1}{Z(\beta)}\sum_{\alpha}\langle\alpha|s_{n_{1}}^{z}\mbox{ }e^{-\beta E_{\alpha}}|\alpha\rangle\\
\langle s_{n_{1}}^{z}s_{n_{2}}^{z}\rangle_{T} & \equiv & \frac{1}{Z(\beta)}\mbox{Tr}(s_{n_{1}}^{z}s_{n_{2}}^{z}\mbox{ }e^{-\beta H})=\frac{1}{Z(\beta)}\sum_{\alpha}\langle\alpha|s_{n_{1}}^{z}s_{n_{2}}^{z}\mbox{ }e^{-\beta E_{\alpha}}|\alpha\rangle
\end{eqnarray*}
where $Z(\beta)=\mbox{Tr}(e^{-\beta H})$ is the partition function.
In the presence of an external field $h$, the partition function
is also a function of the field i.e., $Z(\beta,h)$. We currently
do not have a good enough understanding of the spectrum above the
ground state to make analytical progress on this; $\langle s_{n_{1}}^{z}\rangle_{T}$
and $\langle s_{n_{1}}^{z}s_{n_{2}}^{z}\rangle_{T}$ need to be calculated
numerically. 
\subsection{\label{sub:Discussion-of-limits}Limits with respect to large parameters:
Physical vs. Excursions}
In Sec. \ref{sub:Two-point-functions} we will calculate two-point
correlation functions, and in Sec.\ref{sub:Block-Entanglement-Entropy}
the block entanglement entropy of the $L-$middle consecutive spins
denoted by $S_{L}$ (where $L\equiv n_{2}-n_{1}$). We are interested
in the asymptotic form and scaling of these quantities with respect
to $n$ and $L$. 

We have two large parameters, one is $2n$ which is the size of the
chain and the other is $L$, which is the number of consecutive spins
centered about the middle of the chain. In the derivation of the two-point
function (Sec. \ref{sub:Two-point-functions}) and block entanglement
entropy (Sec. \ref{sub:Block-Entanglement-Entropy}) care must be
taken in taking the limits. Two ways of taking the limits that we
like to concern ourselves with are what we call Physical and Excursions:
\begin{enumerate}
\item \textbf{Physical: }In this limit, one first takes the limit of the
system size to infinity while keeping $L$ \textit{fixed}. Once the
asymptotic with respect to $n$ is obtained, one then assumes a large
$L$ and derives asymptotic results. This corresponds to taking the
thermodynamical limit in physics. Mathematically, the ``Physical''
limits, involving $n$ and $L$, that we shall derive below are:
\[
\begin{array}{c}
\lim_{L\rightarrow\infty}\left\{ \lim_{n\rightarrow\infty}\langle\widehat{m}_{n_{1}}\widehat{m}_{n_{2}}\rangle\right\} \\
\lim_{L\rightarrow\infty}\left\{ \lim_{n\rightarrow\infty}S_{L}\right\} 
\end{array}
\]
where $\langle\widehat{m}_{n_{1}}\widehat{m}_{n_{2}}\rangle$ and
$S_{L}$ are functions of $n$ and $L$. In practice, however, when
one makes plots of such asymptotically obtained results or when one
runs numerical algorithms such as DMRG, both $n$ and $L$ are finite,
and care must be taken as to what ratios $\frac{L}{n}$ are small
enough to be compared with analytical formulas. 
\item \textbf{Brownian Excursions:} A different asymptotic can be obtained where
$n_{1}=2\lambda n$ and $n_{2}=2\mu n$ with $0<\lambda<\mu<1$. In
this limit, $L\equiv n_{2}-n_{1}$ tends to infinity simultaneously
with $n$, and results from universal convergence of random walks
to Brownian excursions can be evoked to calculate the scaling of the
two point function. \end{enumerate}
\begin{rem}
In calculations of entanglement, we satisfy ourselves with the physics
of the model and leave derivations in the Excursion limit for future
work.\\
\end{rem}
\section{Height and height-height correlation functions}
In order to calculate correlation functions, we first prove the following
lemma:
\begin{lem}
\label{lem:GeneralizedBallot}Let $D_{L,m_{1},m_{2}}$ be the number
of non-negative walks on $L$ steps that connect the points $(0,m_{1})$
and $(L,m_{2})$ where in each intermediate step the coordinates $(x,y)$
can change to either $(x+1,y+1)$ or $(x+1,y-1)$. In other words,
these are just like Dyck paths \cite{StanleyVol2}, except that they
start and end at heights $m_{1}\ge0$ and $m_{2}\ge0$ respectively.
This number is zero if $|m_{2}-m_{1}|>L$ or if $m_{2}-m_{1}\ne L$
mod ($2$). Otherwise it is given by 
\begin{equation}
D_{L,m_{1},m_{2}}=\left(\begin{array}{c}
L\\
\frac{L+|m_{2}-m_{1}|}{2}
\end{array}\right)-\left(\begin{array}{c}
L\\
\frac{L+(m_{2}+m_{1})}{2}+1
\end{array}\right)\label{eq:Eq_Theorem}
\end{equation}
\end{lem}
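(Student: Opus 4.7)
The plan is to prove the formula by the classical reflection principle (the Andr\'e/ballot argument), counting unrestricted $\pm1$-lattice walks and subtracting those that stray below the $x$-axis.

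First I would dispense with the parity and range conditions: every $\pm1$ step changes the height by $\pm 1$, so after $L$ steps the parity of the endpoint height is fixed, forcing $m_2 - m_1 \equiv L \pmod{2}$; and since each step changes the height by at most $1$, we need $|m_2-m_1| \le L$. If either condition fails, $D_{L,m_1,m_2}=0$, matching the claim. Otherwise, if $u$ denotes the number of up-steps and $d$ the number of down-steps, then $u+d=L$ and $u-d=m_2-m_1$, so $u=(L+m_2-m_1)/2$ is a well-defined non-negative integer, and the number of \emph{unrestricted} walks from $(0,m_1)$ to $(L,m_2)$ is
\[
\binom{L}{\tfrac{L+m_2-m_1}{2}} \;=\; \binom{L}{\tfrac{L+|m_2-m_1|}{2}},
\]
where the last equality uses $\binom{L}{k}=\binom{L}{L-k}$ to cover both signs of $m_2-m_1$.

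Next I would count the unwanted walks, i.e.\ those that at some point touch the forbidden line $y=-1$. The reflection principle sets up a bijection between such walks and the set of all unrestricted walks from the reflected starting point $(0,-m_1-2)$ to $(L,m_2)$: given a forbidden walk, locate the first time it hits $y=-1$ and reflect the portion before that time across $y=-1$; this is an involution onto walks starting at $(0,-m_1-2)$ (every such walk must cross $y=-1$ to reach a non-negative endpoint). The count of unrestricted walks from $(0,-m_1-2)$ to $(L,m_2)$ is
\[
\binom{L}{\tfrac{L+(m_2+m_1+2)}{2}}\;=\;\binom{L}{\tfrac{L+(m_2+m_1)}{2}+1}.
\]

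Subtracting the forbidden walks from the total gives exactly \eqref{eq:Eq_Theorem}. The only mildly delicate step is verifying that the reflection involution is well-defined and surjective onto walks from $(0,-m_1-2)$ to $(L,m_2)$ — specifically that every such reflected walk must cross $y=-1$, which follows from $-m_1-2 \le -2 < 0 \le m_2$ together with the intermediate value property for integer-valued $\pm1$ walks. Once that is noted, the remainder is a direct binomial identity.
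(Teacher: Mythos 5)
Your proposal is correct and follows essentially the same route as the paper: count unrestricted $\pm1$ walks via the up-step/down-step balance, then remove the walks that touch $y=-1$ by reflecting the initial segment across that line to obtain a bijection with unrestricted walks from $(0,-(m_1+2))$ to $(L,m_2)$. The only difference is presentational — you make the parity/range conditions and the surjectivity of the reflection explicit, which the paper leaves implicit — so no further comment is needed.
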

\begin{proof}
We prove this by first counting the total number of paths that connect
$(x,y)=(0,m_{1})$ and $(x,y)=(L,m_{2})$ and then subtract from it
the total number of paths that become negative. To count the latter
we use the reflection principle. Suppose for now that $m_{2}\ge m_{1}$,
each path that connects $(0,m_{1})$ and $(L,m_{2})$ necessarily
has $(m_{2}-m_{1})$ excess number of step ups. Consequently, the
total number of step downs are $\frac{L-(m_{2}-m_{1})}{2}$. Therefore,
the total number of paths that connect $(0,m_{1})$ and $(L,m_{2})$
is $\left(\begin{array}{c}
L\\
\frac{L-(m_{2}-m_{1})}{2}
\end{array}\right)$. Had it been that $m_{1}\ge m_{2}$ this number clearly would be
$\left(\begin{array}{c}
L\\
\frac{L-(m_{1}-m_{2})}{2}
\end{array}\right)$. Using the fact that $\left(\begin{array}{c}
L\\
m
\end{array}\right)=\left(\begin{array}{c}
L\\
L-m
\end{array}\right)$ we arrive at $\left(\begin{array}{c}
L\\
\frac{L+|m_{2}-m_{1}|}{2}
\end{array}\right)$.
\begin{figure}
\centering{}\includegraphics[scale=0.28]{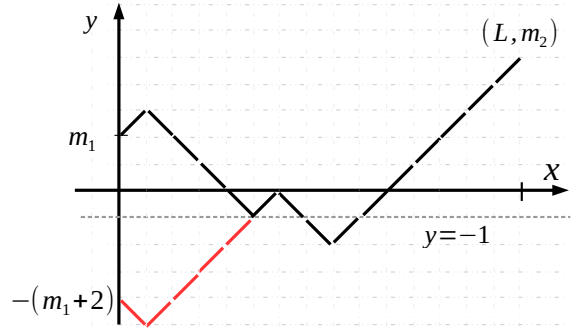}\caption{\label{fig:Bijection-between-paths}Bijection between paths starting
at $(0,m_{1})$ that become negative (i.e., ``bad'' paths) and paths
that start at $(0,-(m_{1}+2))$.}
\end{figure}
We want to subtract the number of ``bad'' paths, which cross the
$y=0$ line at least once. Any bad path, $P$, necessarily has to
reach the line $y=-1$ (see Fig. \eqref{fig:Bijection-between-paths}).
Define a new path by reflecting about $y=-1$ the part of $P$ up
to the first point it touches the line. To every such ``bad'' path
$P$, there corresponds a path $P'$ between $(0,-(m_{1}+2))$ and
$(L,m_{2})$. Moreover, every path between $(0,-(m_{1}+2))$ and $(L,m_{2})$
necessarily crosses $y=-1$ line and by reflection across it will
be mapped to a ``bad'' path. Therefore there is a bijection between
all the ``bad'' paths and the lattice paths that connect $(0,-(m+2))$
and $(L,m_{2})$. The total number of excess step ups are $m_{1}+m_{2}+2$
and hence there are $\frac{L-(m_{1}+m_{2}+2)}{2}$ step downs. The
total number of bad paths is then $\left(\begin{array}{c}
L\\
\frac{L-(m_{1}+m_{2})}{2}-1
\end{array}\right)$. Subtracting this from the total number of paths we obtain $D_{L,m_{1},m_{2}}$
(Eq.\ref{eq:Eq_Theorem}). \end{proof}
\begin{rem}
When $m_{1}=m_{2}=0$, and $L$ is even, $D_{L,0,0}=\frac{1}{\frac{L}{2}+1}\left(\begin{array}{c}
L\\
L/2
\end{array}\right)$ which is the $(L/2)^{th}$ Catalan number. When $m_{1}=0$ and $m_{2}=m$,
$D_{L,0,m}=\frac{m+1}{L+1}\left(\begin{array}{c}
L+1\\
\frac{L-m}{2}
\end{array}\right)$ is the solution of a variation of the Ballot problem, where ties
are allowed. 
\end{rem}
The number of walks between $(n_{1},m_{1})$ and $(n_{2},m_{2})$
made up of up and down steps, as well as, flat steps (Motzkin like
walks) is (recall that $L\equiv n_{2}-n_{1}$) 
\begin{eqnarray}
M_{L,m_{1},m_{2}} & = & \sum_{k=0}^{L-|m_{2}-m_{1}|}\left(\begin{array}{c}
L\\
k
\end{array}\right)D_{L-k,m_{1},m_{2}}=\sum_{k=0}^{L-|m_{2}-m_{1}|}\left(\begin{array}{c}
L\\
k
\end{array}\right)\left\{ \left(\begin{array}{c}
L-k\\
\frac{L-k+|m_{2}-m_{1}|}{2}
\end{array}\right)-\left(\begin{array}{c}
L-k\\
\frac{L-k+(m_{2}+m_{1})}{2}+1
\end{array}\right)\right\} \label{eq:M_Ls}
\end{eqnarray}
Let  $2i=L-k-|m_{2}-m_{1}|$ to take care of the parity; the summand
after this transformation becomes
\begin{eqnarray}
M_{L,m_{1},m_{2},i} & = & E_{L,m_{1},m_{2},i}-F_{L,m_{1},m_{2},i},\label{eq:MLms}
\end{eqnarray}
where
\begin{eqnarray*}
E_{L,m_{1},m_{2},i} & \equiv & \left(\begin{array}{c}
L\\
2i+|m_{2}-m_{1}|
\end{array}\right)\left(\begin{array}{c}
2i+|m_{2}-m_{1}|\\
i+|m_{2}-m_{1}|
\end{array}\right)\\
F_{L,m_{1},m_{2},i} & \equiv & \left(\begin{array}{c}
L\\
2i+|m_{2}-m_{1}|
\end{array}\right)\left(\begin{array}{c}
2i+|m_{2}-m_{1}|\\
i+\frac{|m_{2}-m_{1}|+(m_{1}+m_{2})}{2}+1
\end{array}\right).
\end{eqnarray*}

The sum in Eq. \eqref{eq:M_Ls} reads 
\begin{equation}
M_{L,m_{1},m_{2}}=\sum_{i=0}^{\frac{L-|m_{2}-m_{1}|}{2}}E_{L,m_{1},m_{2},i}-F_{L,m_{1},m_{2},i}\quad.\label{eq:M_exact}
\end{equation}
Now using multinomial identities, and recalling the definition of
trinomial coefficients $\left(\begin{array}{ccccc}
 &  & L\\
x & , & y & , & z
\end{array}\right)\equiv\frac{L!}{x!y!z!}$ with $x+y+z=L$, we find 
\begin{eqnarray*}
E_{L,m_{1},m_{2},i} & = & {\scriptstyle \left(\begin{array}{ccccc}
 &  & L\\
L-2i-|m_{2}-m_{1}| & , & i & , & i+|m_{2}-m_{1}|
\end{array}\right)}\\
F_{L,m_{1},m_{2},i} & = & {\scriptstyle \left(\begin{array}{ccccc}
 &  & L\\
L-2i-|m_{2}-m_{1}| & , & i+\frac{|m_{2}-m_{1}|+(m_{1}+m_{2})}{2}+1 & , & i+\frac{|m_{2}-m_{1}|-(m_{1}+m_{2})}{2}-1
\end{array}\right)}.
\end{eqnarray*}
Comment: Recall that $C_{i}\equiv\frac{1}{i+1}\left(\begin{array}{c}
2i\\
i
\end{array}\right)$ be the $i^{th}$ Catalan number. As a special case, we see that from
Eq. \eqref{eq:M_exact} we have  $N=M_{2n,0,0}=\sum_{i=0}^{n}\left(\begin{array}{c}
2n\\
2i
\end{array}\right)C_{i}=\frac{1}{2n+1}\sum_{i=0}^{n}\left(\begin{array}{c}
2n+1\\
2n-2i\mbox{ },\mbox{ }i\mbox{ },\mbox{ }i+1
\end{array}\right)$.

Using Eq. \eqref{eq:MLms} and \eqref{eq:M_exact} we obtain 
\begin{eqnarray}
M_{L,0,m}=\frac{m+1}{L+1}\sum_{i\ge0}\left(\begin{array}{c}
L+1\\
L-2i-m\mbox{ }\mbox{ },\mbox{ }i\mbox{ },\mbox{ }\mbox{ }i+m+1
\end{array}\right) \label{eq:Mnms}
\end{eqnarray}
\subsection{Asymptotic Analysis}
In the following sections we encounter sums (e.g., Eqs \eqref{eq:Mnms}) whose asymptotic values are desired. In what follows we will make extensive use of the Stirlings formula, as well as, the integral test in the theory of sequences and series and Euler-Maclaurin formula. The latter ensure the accuracy and convergence of the sums to the obtained values.

The Euler-Maclaurin formula provides a controlled approximation of sums with integrals and vice versa \cite{de1970asymptotic}. Suppose $k$ and $\ell$ are natural numbers and $f(x)$ is a real valued continuous function of the number $x\in [ k,\ell]$, then
\begin{eqnarray}
\sum_k^\ell f(x)\approx \int_k^\ell f(x) dx +\frac{f( \ell )-f(k)}{2}+\sum_{h=1}^{\lfloor p/2\rfloor} \frac{B_{2h}}{(2h)!}\left( f^{(2h-1)}(\ell)-f^{(2h-1)}(k)\right)+R, \label{EulerMac}
\end{eqnarray}
where $f^{(2h-1)}$ denotes the $(2h-1)^{\mbox{st}}$ derivative of $f$, and $B_{2h}$ are the Bernoulli numbers. The remainder,  $R$, satisfies
\begin{eqnarray}
|R|\le\frac{2\zeta(p)}{(2\pi)^p}\int_{k}^\ell \left| f^{(p)}(x)\right| dx,\label{R}
\end{eqnarray}
where $\zeta$ is the Riemann zeta function.

This formula is particularly robust for functions that involve gaussians as a factor. The error term can be zero and often is small as the following lemma shows. We will use the following lemma repeatedly:
\begin{lem}\label{Lem:integralSum}
Let $L\gg 1$, $g>1$ be a fixed positive integer and $a>0$ a real number. We have
\begin{eqnarray*}
\sum_{m=0}^L m^g \exp\left(-\frac{am^2}{L}\right)=\int_0^\infty  m^g \exp\left(-\frac{am^2}{L}\right)\mbox{ } dm + O(L^g \exp{(-aL)}).
\end{eqnarray*}\label{Lem:Asymp}
\end{lem}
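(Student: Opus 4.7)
The plan is to split the error as
$$\sum_{m=0}^L f(m) - \int_0^\infty f(x)\,dx = \left(\sum_{m=0}^L f(m) - \int_0^L f(x)\,dx\right) - \int_L^\infty f(x)\,dx,$$
with $f(x) = x^g e^{-ax^2/L}$, and to show that each of the two bracketed pieces is $O(L^g e^{-aL})$. For the tail, the key inequality is that $x \ge L$ implies $x^2/L \ge x$, hence $e^{-ax^2/L}\le e^{-ax}$ on $[L,\infty)$; repeated integration by parts on $\int_L^\infty x^g e^{-ax}\,dx$ then produces exactly $O(L^g e^{-aL})$.

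For the sum-to-integral gap on $[0,L]$ I would apply the Euler-Maclaurin formula \eqref{EulerMac}. The half-sum boundary piece $\tfrac{1}{2}(f(0)+f(L))$ equals $\tfrac{1}{2}L^g e^{-aL}$ since $f(0)=0$, which already fits the stated bound. For the derivative boundary terms at $x=L$, an induction using $\tfrac{d}{dx}\bigl(P(x)e^{-ax^2/L}\bigr)=\bigl(P'(x)-\tfrac{2ax}{L}P(x)\bigr)e^{-ax^2/L}$ shows that $f^{(k)}$ is (polynomial of degree $g+k$ with leading coefficient $O(L^{-k})$) times the Gaussian, so $|f^{(k)}(L)|=O(L^g e^{-aL})$ for each fixed $k$. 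At $x=0$, the Taylor expansion $f(x)=\sum_{n\ge 0}\frac{(-a/L)^n}{n!}\,x^{g+2n}$ reveals that $f^{(k)}(0)=0$ unless $k\ge g$ and $k-g$ is even: for even $g$, every odd-order derivative vanishes at $0$ and the entire $x=0$ contribution to Euler-Maclaurin drops out; for odd $g$ the surviving terms carry explicit factors $L^{-(k-g)/2}$, which are asymptotically negligible relative to the Gaussian main term $\int_0^\infty f\,dx = \Theta(L^{(g+1)/2})$ and can be absorbed into the claimed error.

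Finally, the Euler-Maclaurin remainder $R$ is controlled via Eq. \eqref{R}: for any fixed truncation order $p$, $|f^{(p)}|$ is a polynomial times the Gaussian, so $\int_0^L|f^{(p)}(x)|\,dx$ stays bounded uniformly in $L$, and multiplied by $\zeta(p)/(2\pi)^p$ it sits safely inside the error budget. The main obstacle is the bookkeeping for the derivatives at $x=L$: one has to verify inductively that the $L^{-k}$ factors generated by repeatedly differentiating the Gaussian exponent exactly cancel the $L^{k}$ growth in the polynomial degree, so that no stray factor of $L$ beyond $L^g$ ever appears in $f^{(k)}(L)$. Once this induction is carried out, the tail integral, the $x=L$ boundary terms, and the (vanishing or subdominant) $x=0$ boundary terms assemble to give the stated $O(L^g e^{-aL})$ bound.
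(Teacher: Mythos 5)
Your route is essentially the paper's: the paper writes $\sum_{m=0}^{L}=\sum_{m=0}^{\infty}-\sum_{m=L+1}^{\infty}$, controls the tail \emph{sum} with the integral test, and then compares $\sum_{m=0}^{\infty}$ with $\int_{0}^{\infty}$ via Euler--Maclaurin with $p=2$; you instead compare $\sum_{m=0}^{L}$ with $\int_{0}^{L}$ and control the tail \emph{integral} via $e^{-ax^{2}/L}\le e^{-ax}$ on $[L,\infty)$. Both tail arguments are fine, and your inductive bookkeeping showing $|f^{(k)}(L)|=O(L^{g}e^{-aL})$ for each fixed $k$ is correct and more explicit than anything the paper offers.

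The problem is the final assembly, and it is a genuine gap. First, the claim that $\int_{0}^{L}|f^{(p)}(x)|\,dx$ is bounded uniformly in $L$ for \emph{any} fixed $p$ is false: rescaling $x=\sqrt{L}\,u$ gives $f^{(p)}(x)=L^{(g-p)/2}h_{p}(x/\sqrt{L})$ with $h_{p}$ a fixed polynomial times a Gaussian, so $\int_{0}^{L}|f^{(p)}|\,dx=\Theta\bigl(L^{(g+1-p)/2}\bigr)$; you need $p\ge g+1$ just to make the remainder $O(1)$, and no fixed $p$ makes it exponentially small, so $R$ never ``sits inside'' an $O(L^{g}e^{-aL})$ budget. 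Second, for odd $g$ your own Taylor expansion shows the $k=g$ boundary term at $x=0$ is $-\frac{B_{g+1}}{(g+1)!}f^{(g)}(0)=-\frac{B_{g+1}}{g+1}$, a nonzero \emph{constant} (equal to $\tfrac{1}{120}$ for $g=3$, the case used in the numerator of Eq.~\eqref{eq:Magnetization_Exact}); ``negligible relative to the main term'' does not mean ``absorbable into $O(L^{g}e^{-aL})$,'' and in fact the identity as stated fails for odd $g$ --- the true additive error is $\Theta(1)$ there, while for even $g$ an exponentially small error does hold but comes from Poisson summation (giving $O(e^{-\pi^{2}L/a})$) rather than from any finite-order Euler--Maclaurin truncation. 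In fairness, the paper's proof stumbles at exactly the same spot: it argues $R=0$ from $\int_{0}^{\infty}f^{(2)}=0$, whereas the bound \eqref{R} involves $\int|f^{(2)}|$. None of this harms the applications, since the lemma is only ever used against main terms of size $\Theta(L^{(g+1)/2})$; but a correct proof must either weaken the conclusion to a relative error $1+O(1/L)$ (which your estimates already deliver) or restrict to even $g$.
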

\begin{proof}
\begin{eqnarray*}
\sum_{m=0}^L m^g \exp\left(-\frac{am^2}{L}\right)=\sum_{m=0}^\infty m^g \exp\left(-\frac{am^2}{L}\right) - \sum_{m=L+1}^\infty m^g \exp\left(-\frac{am^2}{L}\right)
\end{eqnarray*}
Since the summand is decreasing on $[L,\infty)$, using the integral test, we have
\begin{eqnarray*}
0 \le\sum_{m=L+1}^\infty m^g \exp\left(-\frac{am^2}{L}\right)-\int_{m=L+1}^\infty m^g \exp\left(-\frac{am^2}{L}\right)\mbox{ }dm\le \frac{(L+1)^g \exp\left(-\frac{a(L+1)^2}{L}\right)}{2}.
\end{eqnarray*}
Therefore, 
\begin{eqnarray*}
\sum_{m=0}^L m^g \exp\left(-\frac{am^2}{L}\right)=\sum_{m=0}^\infty m^g \exp\left(-\frac{am^2}{L}\right) +O(L^g \exp{(-aL)}).
\end{eqnarray*}
Since the summand vanishes at zero and infinity, using Euler-Maclauren formula with $p=2$ we have
\begin{eqnarray*}
\sum_{m=0}^\infty m^g \exp\left(-\frac{am^2}{L}\right) = \int_{m=0}^\infty m^g \exp\left(-\frac{am^2}{L}\right) +R,
\end{eqnarray*}
where denoting by $f(m)\equiv m^g \exp\left(-\frac{am^2}{L}\right)$ the error term vanishes because $\frac{2\zeta(2)}{(2\pi)^2}\int_{0}^\infty f^{(2)}(m) dm=0$.
\end{proof}
The rest of this section derives the asymptotic form of the sum in Eqs. \eqref{eq:Mnms} in the large $L$ limit. Later $L$ is replace by the appropriate large parameters $n_1$ or $2n-n_1$. A generalization of the method below is developed in Subsection \ref{sub:Two-point-functions}. The starting point is the summand (with $L\gg 1$) 
\begin{equation}
M_{L,m,i}=\left(m+1\right)\left(\begin{array}{ccc}
 & L\\
i+m+1 & i & L-2i-m
\end{array}\right)\quad.\label{eq:Trinomial_0}
\end{equation}
The saddle point in the $\left(m,i\right)$-plane, must simultaneous satisfy
\[
\begin{array}{ccccccc}
\frac{M_{L,m,i+1}}{M_{L,m,i}} & = & 1, & \qquad & \frac{M_{L,m+1,i}}{M_{L,m,i}} & = & 1\quad.\end{array}
\]
The condition $\frac{M_{L,m,i+1}}{M_{L,m,i}}=1$ gives $\left(L-2i-m\right)^{2}-i\left(i+m\right)\approx0$,
yet $\frac{M_{L,m+1,i}}{M_{L,m,i}}=1$ has its maximum at $m=0$.
Solution of $i$ gives, 
\begin{eqnarray}
i_{sp} & = & \frac{L}{3} -\frac{m}{2}+\frac{m}{8}\left(\frac{m}{L}\right)+\frac{3 m}{128}\left(\frac{m}{L}\right)^{3}+\mathcal{O}\left(L\left(\frac{m}{L}\right)^{5}\right)\label{eq:saddle}\\
 & \approx & \frac{L}{3}-\frac{m}{2}+\frac{m}{8}\left(\frac{m}{L}\right)\mbox{ }.\nonumber 
\end{eqnarray}

Before getting an asymptotic expansion for Eq. \eqref{eq:Trinomial_0},
we consider an example. We will analyze a trinomial coefficient, where
$x+y+z=0$
\begin{eqnarray}
\left(\begin{array}{ccc}
 & L\\
\frac{L}{3}+x\mbox{ } & \mbox{ }\frac{L}{3}+y\mbox{ } & \mbox{ }\frac{L}{3}+z
\end{array}\right) & \approx & 3 ^{L} \sqrt{\frac{2\pi L}{8\pi^{3}\left(\frac{L}{3}+x\right)\left(\frac{L}{3}+y\right)\left(\frac{L}{3}+z\right)}}\label{eq:Trinomial_1}\\
 & \times & \left(\frac{L}{L+3 x}\right)^{\frac{L}{3}+x}\left(\frac{L}{L+3 y}\right)^{\frac{L}{3}+y}\left(\frac{L}{L+3 z}\right)^{\frac{L}{3}+z}\nonumber 
\end{eqnarray}
But,
\begin{eqnarray*}
\left(\frac{L}{L+3 x}\right)^{\frac{L}{3}+x} & = & \exp\left\{ -\left(\frac{L}{3}+x\right)\log\left(1+\frac{3 x}{L}\right)\right\} \\
 & \approx & \exp\left\{ -\left(\frac{L}{3}+x\right)\left(\frac{3x}{L}-\frac{1}{2}\left(\frac{3 x}{L}\right)^{2}\right)\right\} \\
 & \approx & \exp\left\{ -x-\frac{3 x^{2}}{2L}\right\}\\ 
 \left(\frac{L}{L+3 y}\right)^{\frac{L}{3}+y}  & \approx & \exp\left\{ -y-\frac{3 y^{2}}{2L}\right\} \\
 \left(\frac{L}{L+3 z}\right)^{\frac{L}{3}+z}  & \approx & \exp\left\{ -z-\frac{3 z^{2}}{2L}\right\} 
\end{eqnarray*}

In Eq. (\ref{eq:Trinomial_1}), inside the square root is approximately
$\frac{3\sqrt{3}}{2\pi L}$.
Since $x+y+z=0$,
\begin{equation}
\left(\begin{array}{ccc}
 & L\\
\frac{L}{3}+x\mbox{ } & \mbox{ }\frac{L}{3}+y\mbox{ } & \mbox{ }\frac{L}{3}+z
\end{array}\right)\approx\frac{3^{L+1}\sqrt{3}}{2\pi L}\exp\left(-\frac{3}{2}\frac{x^{2}+y^{2}+z^{2}}{L}\right)\label{eq:trinomial_example}
\end{equation}

Now we use this result to evaluate Eq. \eqref{eq:Trinomial_0} by letting
$i+m=\frac{L}{3}+x$ , $i=\frac{L}{3}+y$ and $L-2i-m=\frac{L}{3}+z$.
Since the standard of deviation of multinomial distributions scales
as $\sqrt{L}$, to get a better asymptotic form, we let $i=i_{sp}+\beta\sqrt{L}$
and $m=\alpha\sqrt{L}$. Hence we identify, 
\begin{eqnarray*}
x & = & \left(\beta+\frac{\alpha}{2}\right)\sqrt{L}+\frac{\alpha^{2}}{8}\\
y & = & \left(\beta-\frac{\alpha}{2}\right)\sqrt{L}+\frac{\alpha^{2}}{8}\\
z & = & -2\beta\sqrt{L}-\frac{\alpha^{2}}{4}
\end{eqnarray*}
 Making these substitutions we get $-\frac{3}{2}\frac{x^{2}+y^{2}+z^{2}}{L}=-\frac{3 \alpha^{2}}{4}-9\beta^{2}-\mathcal{O}\left(L^{-1/2}\right)$. Therefore,
using Eq. \eqref{eq:trinomial_example}, Eq. \eqref{eq:Trinomial_0} becomes approximately equal to
\begin{eqnarray*}
M\left(L,\alpha,\beta\right) & \equiv & \frac{ 3^{L+1}\sqrt{3}\mbox{ } \alpha}{2\pi L^{3/2}} \exp\left(-\frac{3 \alpha^{2}}{4}-9\beta^{2}\right).
\end{eqnarray*}

Using the lemma, we replace the sum over $i$ with an integral over with respect to $\sqrt{L} \mbox{ }d\beta$ and perform the resulting gaussian integration around $i_{sp}$ to arrive at the asymptotic form of Eq. \eqref{eq:Mnms}. Substituting $L=n_1$ we have
\begin{eqnarray}
M_{n_{1},0,m}=\frac{m+1}{n_{1}+1}\sum_{i\ge0}\left(\begin{array}{c}
n_{1}+1\\
n_{1}-2i-m\mbox{ }\mbox{ },\mbox{ }i\mbox{ },\mbox{ }\mbox{ }i+m+1
\end{array}\right) & \approx & \frac{3^{n_{1}+3/2}}{2\sqrt{\pi}n_{1}^{3/2}}\mbox{ }\alpha_{1}\exp\left(-\frac{3\alpha_{1}^{2}}{4}\right),\label{eq:MnmsAsym}
\end{eqnarray}
where $\alpha_{1}=m/\sqrt{n_{1}}$. Replacing $n_1$ with $2n-n_1$ an entirely a similar derivation gives
\begin{eqnarray}
M_{2n-n_{1},m,0}=\frac{m+1}{2n-n_{1}+1}\sum_{i\ge0}\left(\begin{array}{c}
2n-n_{1}+1\\
2n-n_{1}-2i-m\mbox{ }\mbox{ },\mbox{ }i\mbox{ },\mbox{ }\mbox{ }i+m+1
\end{array}\right) & \approx & \frac{3^{2n-n_{1}+3/2}}{2\sqrt{\pi}(2n-n_{1})^{3/2}}\mbox{ }\alpha_{2}\exp\left(-\frac{3\alpha_{2}^{2}}{4}\right),\label{eq:MnmsAsym2}
\end{eqnarray}
where $\alpha_{2}=m/\sqrt{(2n-n_{1})}$. 

\begin{rem} In the calculations below the approximation of the sums and evaluation of the resulting integral representations follow the above derivations. More examples and discussion about the approximations of multinomials, as well as, the saddle point technique from an analytic combinatorial perspective can be found in Flajolet and Sedgewick's book \cite{flajolet2009analytic} (See for example Chapter 8.)
\end{rem}
\subsection{Expected height, and $\langle s_{n_{1}}^{x}\rangle$ , $\langle s_{n_{1}}^{y}\rangle$
and $\langle s_{n_{1}}^{z}\rangle$ in the physical limit}
Anisotropy of a Hamiltonian can influence the phase structure \cite{abgaryan2014quantum}. To better understand the anisotropy of the model we prove the following:
\begin{lem}
\label{lem:SxSy}Let $1<n_{1}<2n$  be any site on the chain, then
$\langle s_{n_{1}}^{x}\rangle=\langle s_{n_{1}}^{y}\rangle=0$. \end{lem}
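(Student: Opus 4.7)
The plan is to exploit the $U(1)$ symmetry noted in the remark of Section \ref{sub:Hamiltonian-in-spin-operator}, namely that $\widehat{m}_{2n}\equiv\sum_{j=1}^{2n} s^{z}_{j}$ commutes with $H$. The strategy is to (i) identify the Motzkin state as a zero-eigenvalue eigenstate of $\widehat{m}_{2n}$, (ii) observe that $s^{x}_{n_{1}}$ and $s^{y}_{n_{1}}$ shift this eigenvalue by $\pm 1$, and therefore (iii) deduce the expectation values vanish by orthogonality of distinct $\widehat{m}_{2n}$-eigenspaces.

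First I would show that $\widehat{m}_{2n}|{\cal M}_{2n}\rangle=0$. In the $\{u,0,d\}$ labeling of the computational basis, a Motzkin walk starts at height $0$ and ends at height $0$, so the number of $u$-steps equals the number of $d$-steps. Since $s^{z}|u\rangle=|u\rangle$, $s^{z}|d\rangle=-|d\rangle$, and $s^{z}|0\rangle=0$ from Eq. \eqref{eq:Sz}, every Motzkin basis state $|s\rangle$ appearing in \eqref{eq:GS} satisfies $\widehat{m}_{2n}|s\rangle=0$, so the same holds for the uniform superposition $|{\cal M}_{2n}\rangle$.

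Next I would recall the decomposition $s^{x}_{n_{1}}=\tfrac{1}{2}(S^{+}_{n_{1}}+S^{-}_{n_{1}})$ and $s^{y}_{n_{1}}=\tfrac{1}{2i}(S^{+}_{n_{1}}-S^{-}_{n_{1}})$, together with the commutation relations $[\widehat{m}_{2n},S^{\pm}_{n_{1}}]=\pm S^{\pm}_{n_{1}}$ which follow immediately from $[s^{z}_{j},S^{\pm}_{k}]=\pm S^{\pm}_{k}\delta_{jk}$. Consequently $S^{\pm}_{n_{1}}|{\cal M}_{2n}\rangle$ lies in the $\widehat{m}_{2n}$-eigenspace with eigenvalue $\pm 1$, and $s^{x,y}_{n_{1}}|{\cal M}_{2n}\rangle$ lies in the direct sum of the $\pm 1$ eigenspaces. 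Since $|{\cal M}_{2n}\rangle$ lies in the $0$ eigenspace and the eigenspaces of the Hermitian operator $\widehat{m}_{2n}$ are mutually orthogonal, both inner products $\langle {\cal M}_{2n}|s^{x}_{n_{1}}|{\cal M}_{2n}\rangle$ and $\langle {\cal M}_{2n}|s^{y}_{n_{1}}|{\cal M}_{2n}\rangle$ must vanish, yielding the claim.

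There is no real obstacle to this argument; the only thing to be careful about is confirming that the Motzkin state genuinely lies in the zero magnetization sector (step (i)), which is immediate from the height-zero boundary conditions of Motzkin walks. A fully elementary alternative, if one preferred to avoid invoking $U(1)$ symmetry, would be to expand $s^{x}_{n_{1}}|{\cal M}_{2n}\rangle$ in the computational basis using \eqref{eq:Sx} and observe directly that every resulting basis vector has either one extra $u$ or one extra $d$ at site $n_{1}$ compared with a Motzkin walk, and therefore has vanishing overlap with every Motzkin walk appearing in $|{\cal M}_{2n}\rangle$; the $s^{y}$ case is identical by \eqref{eq:Sy}.
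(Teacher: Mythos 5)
Your argument is correct, but it takes a different route from the paper. The paper proves the lemma by direct computation in the walk basis: it expands $\langle{\cal M}_{2n}|s_{n_{1}}^{x}|{\cal M}_{2n}\rangle=\frac{1}{N}\sum_{p,p'}\langle s_{p'}|s_{n_{1}}^{x}|s_{p}\rangle$ and checks case by case, using Eqs.~\eqref{eq:Sx}--\eqref{eq:Sy}, that $s_{n_{1}}^{x}|s_{p}\rangle$ always produces basis states that either differ from $s_{p}$ only by turning a $u$ or $d$ into a $0$, or turn a $0$ into a superposition of $u$ and $d$; in every case the resulting walk is unbalanced (unequal numbers of up and down steps) and hence orthogonal to every Motzkin walk $s_{p'}$. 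This is exactly the ``fully elementary alternative'' you sketch in your closing paragraph. Your primary argument instead packages the same underlying fact --- that $s^{x}_{n_1}$ and $s^{y}_{n_1}$ move a state out of the zero-magnetization sector --- into the $U(1)$ selection rule: $\widehat{m}_{2n}|{\cal M}_{2n}\rangle=0$ since every Motzkin walk is balanced, $[\widehat{m}_{2n},S^{\pm}_{n_{1}}]=\pm S^{\pm}_{n_{1}}$, so $s^{x,y}_{n_{1}}|{\cal M}_{2n}\rangle$ lives in the $\pm1$ eigenspaces of the Hermitian operator $\widehat{m}_{2n}$ and is orthogonal to $|{\cal M}_{2n}\rangle$. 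Both proofs are sound; yours is more structural and generalizes immediately to any state supported on a single $\widehat{m}_{2n}$-eigenspace (and to the higher-spin models, as the paper's remark notes for its own argument), at the cost of invoking the symmetry machinery, while the paper's case analysis stays entirely at the level of the combinatorial basis and requires no operator algebra.
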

\begin{proof}
Let us first look at $\langle s_{n_{1}}^{x}\rangle=\langle{\cal M}_{2n}|s_{n_{1}}^{x}|{\cal M}_{2n}\rangle$.
The Motzkin state can be written as $|{\cal M}_{2n}\rangle=\frac{1}{\sqrt{N}}\sum_{p=1}^{N}|s_{p}\rangle$,
where $|s_{p}\rangle$ denotes a Motzkin walk. For $\langle{\cal M}_{2n}|s_{n_{1}}^{x}|{\cal M}_{2n}\rangle=\frac{1}{N}\sum_{p}\sum_{p'}\langle s_{p'}|s_{n_{1}}^{x}|s_{p}\rangle$
to be nonzero, it must be that for some $p$ and $p'$, $|s_{p'}\rangle=s^{x}|s_{p}\rangle$;
i.e., any walk $s_{p'}$ and $s_{p}$ must be equal at all $2n-1$
positions excluding the $n_{1}^{st}$ step and the application of
$s^{x}$ at the $n_{1}^{st}$ site should not change the step at that
site. But by Eq. \eqref{eq:Sx}, $s_{n_{1}}^{x}|s_{p}\rangle$ transforms
$|s_{p}\rangle$. Suppose $(s_{p})_{n_{1}}$ is $|+\rangle$ or $|-\rangle$
then $s^{x}|s_{p}\rangle\sim|\tilde{s}_{p}\rangle$ where $(s_{p})_{n_{1}}\ne(\tilde{s}_{p})_{n_{1}}=|0\rangle_{n_{1}}$.
Now suppose $(s_{p})_{n_{1}}=|0\rangle$ and $s_{n_{1}}^{x}|s_{p}\rangle$
gives $s^{x}|0\rangle_{n_{1}}=\frac{1}{\sqrt{2}}\left[|-1\rangle+|+1\rangle\right]$
that is a superposition of two walks one with an excess step up and
one with an excess step down. Both of these walks are not balanced
and cannot be equal to $s_{p'}$. We conclude that $\langle s_{p'}|s_{n_{1}}^{x}|s_{p}\rangle=0$.
An entirely a similar argument applies to $s_{n_{1}}^{y}$ giving
$\langle s_{n_{1}}^{y}\rangle=0$. \end{proof}
\begin{rem}
This proof applies to the generalized model with spin $s>1$ presented
elsewhere \cite{movassagh2016supercritical}.
\end{rem}
Denote by the minimum distance to a boundary by $b\equiv\min\left(2n-n_{1},n_{1}\right)$.  Using Eq. \eqref{eq:Mnms}, we define the probabilities
by 
\[
p_{m}\equiv\frac{M_{n_{1},0,m}M_{2n-n_{1},m,0}}{\sum_{m=0}^{b}M_{n_{1},0,m}M_{2n-n_{1},m,0}}=\frac{m^{2}\exp\left\{ -\frac{3m^{2}}{4}\left[\frac{1}{n_{1}}+\frac{1}{2n-n_{1}}\right]\right\} }{\sum_{m=0}^{b}m^{2}\exp\left\{ -\frac{3m^{2}}{4}\left[\frac{1}{n_{1}}+\frac{1}{2n-n_{1}}\right]\right\} }.
\]
The height expectation value at a distance $b$ from the boundary
is: 
\begin{align}
\langle\widehat{m}_{n_{1}}\rangle & \equiv\sum_{m=0}^{b}m\mbox{ }p_{m}=\frac{\sum_{m=0}^{b}m^{3}\exp\left\{ -\frac{3m^{2}}{4}\left[\frac{1}{n_{1}}+\frac{1}{2n-n_{1}}\right]\right\} }{\sum_{m=0}^{b}m^{2}\exp\left\{ -\frac{3m^{2}}{4}\left[\frac{1}{n_{1}}+\frac{1}{2n-n_{1}}\right]\right\} }.\label{eq:Magnetization_Exact}
\end{align}
The integrals are elementary, with the aid of Lemma \eqref{Lem:integralSum}, they evaluate to give
\begin{eqnarray}
\langle\widehat{m}_{n_{1}}\rangle & \approx & \frac{4}{\sqrt{3\pi}}\sqrt{n_{1}\left(1-\frac{n_{1}}{2n}\right)}\mbox{ }.\label{eq:Magnetization}
\end{eqnarray}
Comment: Clearly, the expected height scales as $\langle\widehat{m}_{n_{1}}\rangle\sim\sqrt{n_{1}}$, which is
expected from the theory of random walks and universality of Brownian motion. See Eq. \eqref{eq:E_m_Excur} in Subsection \ref{sub:BrownianExcursions}
for an alternative  derivation from the theory of Brownian Excursions. 

Application of the binomial expansion to $\langle\Delta\widehat{m}_{n_{1}}\rangle\equiv\langle\widehat{m}_{n_1}-\widehat{m}_{n_{1}-1}\rangle$ gives
\begin{eqnarray}
\langle\Delta\widehat{m}_{n_{1}}\rangle = \frac{2}{\sqrt{3\pi}}\frac{1-n_{1}/n}{\sqrt{n_{1}\left(1-n_{1}/2n\right)}}+\Theta(1/n),\label{eq:Magnetization2}
\end{eqnarray}
where the first term is just the derivative with respect to $n_1$  of $\langle \widehat{m}_{n_{1}}\rangle$ and we have
\begin{equation}
\langle s_{n_{1}}^{z}\rangle\approx\frac{\partial\langle\widehat{m}_{n_{1}}\rangle}{\partial n_{1}}\approx\frac{2}{\sqrt{3\pi}}\frac{1-n_{1}/n}{\sqrt{n_{1}\left(1-n_{1}/2n\right)}}\quad.\label{eq:Sz_mean}
\end{equation}

This shows that the magnetization in this limit (i.e., the bulk), however, vanishes as $n_{1}^{-1/2}$ away from the boundary.

So we have found that the bulk expected magnetization is zero and
that the net small magnetization is propagated from the boundaries
into the bulk. At the boundaries the magnetization is nonzero since
on the left steps down and on the right steps up are forbidden making
the average magnetization positive and negative respectively. See
Fig. \eqref{fig:Sz} for a plot of Eq. \eqref{eq:Sz_mean} and the comparison
of this asymptotic result with the exact sum (Eq. \eqref{eq:Magnetization_Exact})
where the exact $\langle s_{n_{1}}^{z}\rangle\equiv\Delta\langle\widehat{m}_{n_{1}}\rangle=\langle\widehat{m}_{n_{1}}\rangle-\langle\widehat{m}_{n_{1}-1}\rangle$
and the sums of trinomials (Eq. \eqref{eq:Mnms})
were used to obtain $M_{n_{1},0,m}$ and $M_{2n-n_{1},m,0}$ that
appear in Eq. \eqref{eq:Magnetization_Exact}. 
\begin{figure}
\begin{centering}
\includegraphics[scale=0.4]{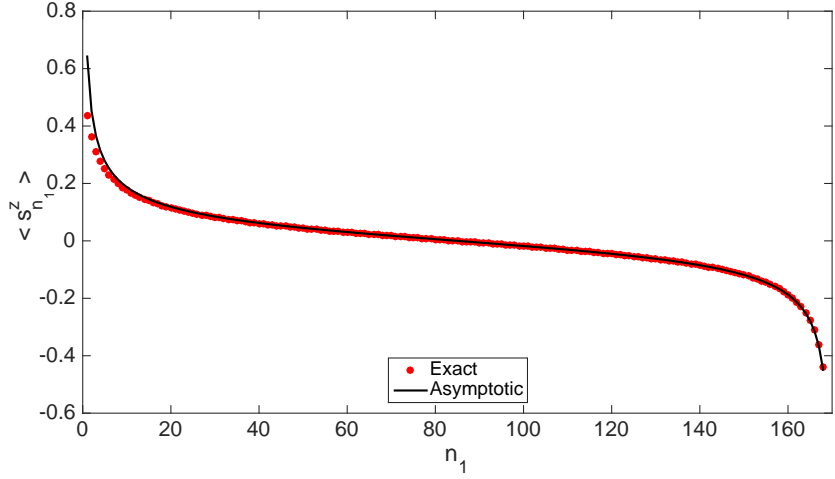}\caption{\label{fig:Sz}$\langle s_{n_{1}}^{z}\rangle$ vs. the location of
the cut, $n_{1}$, on a chain of length $2n=170$. Comparison of the
asymptotic results (Eq. \eqref{eq:Sz_mean}) with the exact expression
for $\Delta\langle m_{n_{1}}\rangle$ (via  Eq. \eqref{eq:Magnetization_Exact}). }
\par\end{centering}
\end{figure}
\subsection{\label{sub:Two-point-functions}Expected height-height and $\langle s_{n-\frac{L}{2}}^{z}s_{n+\frac{L}{2}}^{z}\rangle$
in the physical limit}
We assume that the $L$ consecutive spins are centered on the chain
and denote the distance of the consecutive spins to either boundary
by $b$; i.e., $n_{1}=b$ and $2n-n_{2}=b$, whereby
\begin{eqnarray}
\langle\widehat{m}_{n_{1}}\widehat{m}_{n_{2}}\rangle & = & \frac{\sum_{m_{1}}\mbox{ }\sum_{m_{2}}\mbox{ }m_{1}m_{2}\mbox{ }M_{b,0,m_{1}}M_{L,m_{1},m_{2}}M_{b,m_{2},0}}{\sum_{m_{1}}\mbox{ }\sum_{m_{2}}\mbox{ }M_{b,0,m_{1}}M_{L,m_{1},m_{2}}M_{b,m_{2},0}};\nonumber \\
 & = & \frac{\sum_{m=0}^{b}\mbox{ }\sum_{p=\mbox{max}(-m,-L)}^{\mbox{min}(L,b-m)}\mbox{ }m(m+p)\mbox{ }M_{b,0,m}M_{L,m,(m+p)}M_{b,(m+p),0}}{\sum_{m=0}^{b}\mbox{ }\sum_{p=\mbox{max}(-m,-L)}^{\mbox{min}(L,b-m)}\mbox{ }M_{b,0,m}M_{L,m,(m+p)}M_{b,(m+p),0}};\label{eq:Two_point_physical}
\end{eqnarray}
where the second equality follows from the interdependence of $m_{1}$
and $m_{2}$. Specifically we made the following change of variables
$m_{1}\rightarrow m$ and $m_{2}\rightarrow m+p$ and the limits of
the summation over $p$ look the way they do because in addition to
$m_{1}$ and $m_{2}$ being dependent, the walks on the $L$ consecutive
steps from $n_{1}$ to $n_{2}$ should be non-negative. 

The asymptotic forms of $M_{b,0,m}$ and $M_{b,m+p,0}$ are given
by Eq. \eqref{eq:Mnms}; identifying $m=\alpha_{1}\sqrt{b}$ and $m+p=\alpha_{2}\sqrt{b}$
we have 
\begin{eqnarray}
M_{b,0,m} & \approx & \left\{ \frac{3^{b+3/2}}{2\sqrt{\pi}b^{3/2}}\mbox{ }\alpha_{1}\exp\left(-\frac{3\alpha_{1}^{2}}{4}\right)\right\} =\frac{3^{b+3/2}}{2\sqrt{\pi}b^{2}}\mbox{ }m\mbox{ }\exp\left(-\frac{3}{4}\frac{m^{2}}{b}\right)\mbox{ },\label{eq:Intermediate0}\\
M_{b,m+p,0} & \approx & \left\{ \frac{3^{b+3/2}}{2\sqrt{\pi}b^{3/2}}\mbox{ }\alpha_{2}\exp\left(-\frac{3\alpha_{2}^{2}}{4}\right)\right\} =\frac{3^{b+3/2}}{2\sqrt{\pi}b^{2}}\mbox{ }(m+p)\mbox{ }\exp\left(-\frac{3}{4}\frac{(m+p)^{2}}{b}\right)\mbox{ }.\label{eq:Intermediate1}
\end{eqnarray}

It remains to obtain a good estimate for $M_{L,m,(m+p)}$. The majority
of the walks (probability mass) are centered around a height proportional
to $\sqrt{b}$. Since $L\ll b$, we will not need to subtract 'bad'
walks in Lemma \eqref{lem:GeneralizedBallot} as the following lemma
proves.
\begin{lem}
\label{lem:IgnoreBadWalks}When $1\ll L\ll n$, in Eq. \eqref{eq:Eq_Theorem}
the fraction of 'bad' walks, i.e., walks that become negative on  $L$
steps, is exponentially small in the system's size.\end{lem}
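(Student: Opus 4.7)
My plan is to compare the count of ``bad'' walks (those that dip below zero somewhere in the $L$ steps) to the unrestricted count, and show that on the range of heights that dominate the physical sums the ratio is exponentially small in $n/L$.

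First I would reduce the Motzkin case to the Dyck-like case handled by Lemma \ref{lem:GeneralizedBallot}: a Motzkin walk of length $L$ from height $m_1$ to height $m_2$ is encoded by choosing positions for $k$ flat steps together with an up/down walk of length $L-k$ from $m_1$ to $m_2$, and the reflection-principle subtraction of Lemma \ref{lem:GeneralizedBallot} applies to the latter for each fixed $k$. It therefore suffices to bound, uniformly in the feasible $k$, the ratio of the reflected (``bad'') binomial to the unrestricted binomial,
\[
r_k(m_1,m_2)\;=\;\frac{\binom{L-k}{(L-k+m_1+m_2)/2+1}}{\binom{L-k}{(L-k+|m_2-m_1|)/2}}.
\]
When $m_1+m_2+2>L-k$ the numerator is identically zero, so I may assume $m_1+m_2\le L-k$.

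The next step is Stirling. Using the central expansion $\log\binom{L-k}{(L-k)/2+q}=\log\binom{L-k}{(L-k)/2}-2q^2/(L-k)+O(q^4/(L-k)^3)$, the two central terms cancel in $\log r_k$, and the difference of squares collapses via $((m_1+m_2)/2+1)^2-(|m_2-m_1|/2)^2=(m_1+1)(m_2+1)$, yielding
\[
r_k(m_1,m_2)\;=\;\exp\!\Bigl(-\frac{2\,(m_1+1)(m_2+1)}{L-k}+O\!\bigl(\tfrac{(m_1^2+m_2^2)^2}{L^3}\bigr)\Bigr).
\]

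Finally I would combine this pointwise bound with the Gaussian weights $M_{b,0,m_1}$ and $M_{b,m_2,0}$ of Eqs. \eqref{eq:Intermediate0}--\eqref{eq:Intermediate1}, which concentrate the sums on $m_1,m_2=\Theta(\sqrt{b})=\Theta(\sqrt{n})$ with tails suppressed like $\exp(-\tfrac{3}{4}m_i^2/b)$. On this dominant range $m_1m_2/L=\Theta(n/L)$, so $r_k=\exp(-\Theta(n/L))$, which is exponentially small under $L\ll n$, and averaging over $k$ preserves the bound. The main technical care I anticipate lies in making the Stirling expansion uniform in $k$ and in handling atypical $(m_1,m_2)$: splitting the sum at an $O(\sqrt{n}\log n)$ cutoff, the Stirling error is controlled inside, while outside the boundary Gaussians of Eqs. \eqref{eq:Intermediate0}--\eqref{eq:Intermediate1} render the contribution negligible on their own, so the quartic correction in the exponent of $r_k$ never spoils the bound on the range where the weight is appreciable.
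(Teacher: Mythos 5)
Your route is genuinely different from the paper's, and your central estimate is sharper than anything in the paper's own argument. The paper never computes the reflected-to-unrestricted ratio: it observes instead that a walk of $L$ steps starting at height $m_1$ cannot reach $-1$ unless $m_1\le L$ (it cannot descend by more than $L$), so the bad walks live entirely on the event $\{m_1\le L\}$, whose weight it then tries to bound against the full normalization $\sum_{m}M_{b,0,m}$. You instead keep Lemma \ref{lem:GeneralizedBallot} quantitative: conditioning on the flat-step positions and applying Stirling gives $r_k=\exp\bigl(-2(m_1+1)(m_2+1)/(L-k)\bigr)$ --- the identity $\bigl(\tfrac{m_1+m_2}{2}+1\bigr)^2-\bigl(\tfrac{m_2-m_1}{2}\bigr)^2=(m_1+1)(m_2+1)$ is correct --- and since $r_k$ is largest at $k=0$, averaging over $k$ is indeed harmless. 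On the dominant scale $m_1,m_2=\Theta(\sqrt{n})$ this yields $e^{-\Theta(n/L)}$, which is the sharp pointwise content of the lemma; when $L\ll\sqrt{n}$ the reflected binomial vanishes outright, recovering the paper's support observation as a special case.

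The gap is in your final sentence, where ``negligible'' is asked to do the work of ``exponentially small''. For $m_1,m_2=O(\sqrt{L})$ your $r_k$ is $\Theta(1)$ --- a constant fraction of those middle segments are bad --- and the boundary weights of Eqs. \eqref{eq:Intermediate0}--\eqref{eq:Intermediate1} suppress that range only polynomially: the mass of $\{m\le\sqrt{L}\}$ under a density proportional to $m\,e^{-3m^2/(4b)}$ is $\Theta(L/b)$, not $e^{-\Theta(n)}$. So the endpoint-averaged fraction of bad walks your argument actually delivers is polynomially small in $n$ (a power of $L/n$), which suffices for every downstream use (Eq. \eqref{eq:Intermediate3}, the two-point function, the block entropy), but it is not exponentially small in the system's size. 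You are in good company: the paper's own exponential rate rests on bounding $M_{b,0,m}$ above by $\binom{b}{(b+m)/2}$, i.e., comparing a $3^{b}$-scale Motzkin count against a $2^{b}$-scale binomial, which is not a valid upper bound; done honestly, that route also yields only polynomial suppression of the averaged quantity. The clean true statement your method proves is the exponential bound for each fixed pair with $m_1m_2\gtrsim L\log n$, together with a polynomial bound for the average; you should state it that way rather than claim exponential smallness of the average.
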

\begin{proof}
For any walk on $L$ steps to be excluded, it must have $m_{1}\le L$,
the number of which is $M_{b,0,m\le L}$, which upper bounds the 'bad'
walks. This number is further upper-bounded by $\sum_{m=0}^{L}\left(\begin{array}{c}
b\\
\frac{b+m}{2}
\end{array}\right)$, which is the total number of walks in Eq. \eqref{eq:Eq_Theorem}.
By Stirling's approximation and for small $m$, we have $\left(\begin{array}{c}
b\\
\frac{b+m}{2}
\end{array}\right)\approx\frac{2^{b+1}}{\sqrt{2\pi b}}\exp\left(-\frac{m^{2}}{2b}\right)$. Therefore $\sum_{m=0}^{L}\left(\begin{array}{c}
b\\
\frac{b+m}{2}
\end{array}\right)\le\frac{2^{b+1}L}{\sqrt{2\pi b}}$. Whereas 
\[
\sum_{m=0}^{b}M_{b,0,m}\approx\frac{3^{b+3/2}}{2\sqrt{\pi}b}\int_{0}^{\infty}d\alpha\mbox{ }\alpha\mbox{ }\exp\left(-\frac{3}{4}\alpha^{2}\right)=\frac{3^{b+1}}{\sqrt{\pi}b}.
\]
We conclude that the ratio of bad walks to all the walks in calculating
$M_{L,m_{1},m_{2}}$ is upper bounded by $L\sqrt{\frac{b}{2}}\left(\frac{2}{3}\right)^{b+1}$,
which is exponentially small in the system's size. Moreover, this
bound is not tight.
\end{proof}
Since in the approximation below, the dependence on $m$ drops out
we have (using $2i=L-k-|p|$)
\begin{eqnarray}
M_{L,p}\equiv M_{L,m,(m+p)} & \approx & \sum_{k=0}^{L-|p|}\left(\begin{array}{c}
L\\
k
\end{array}\right)\left(\begin{array}{c}
L-k\\
\frac{L-k+|p|}{2}
\end{array}\right)=\sum_{i\ge0}\left(\begin{array}{ccccc}
 &  & L\\
L-2i-|p| & , & i+|p| & , & i
\end{array}\right)\equiv\sum_{i\ge0}K_{L,i,p}\label{eq:Stirling}
\end{eqnarray}

The maximum of $K_{L,i,p}$ is at $i=L/3$ and $p=0$. The series
expansion of $i_{sp}$ and $p_{sp}$ are obtained by solving for the
fixed point of $K_{L,i,p}$ in the $ip-$plane. This point is the
simultaneous solution of $\frac{K_{L,i+1,p}}{K_{L,i,p}}=1$ and $\frac{K_{L,i,p+1}}{K_{L,i,p}}=1$.
The solution is $p_{sp}=0$ and $i_{sp}\approx\frac{L}{3}$. Since
the standard of deviation of the multinomials scales as $\sqrt{L}$,
to get better estimates we let $p=\alpha\sqrt{L}$ and $i=i_{sp}+\beta\sqrt{L}$,
which give $i\equiv\frac{L}{3}+\beta\sqrt{L}$ and $p\equiv\alpha\sqrt{L}$.

Let $x+y+z=0$ and $i+p=\frac{L}{3}+x$, $i=\frac{L}{3}+y$ and $L-2i-p=\frac{L}{3}+z$,
we have that 
\[
\left(\begin{array}{ccccc}
 &  & L\\
\frac{L}{3}+z & , & \frac{L}{3}+y & , & \frac{L}{3}+x
\end{array}\right)\approx\frac{3^{L+3/2}}{2\pi L}\exp\left\{ -\frac{3}{2L}\left(x^{2}+y^{2}+z^{2}\right)\right\} ,
\]
where $x=\left(\beta+\alpha\right)\sqrt{L}$, $y=\beta\sqrt{L}$ and
$z=\left(-2\beta-\alpha\right)$. This gives
\[
K_{L,i,p}\approx K(L,\alpha,\beta)=\frac{3^{L+3/2}}{2\pi L}\exp\left\{ -3\alpha^{2}-9\alpha\beta-9\beta^{2}\right\} .
\]
Since the saddle point is away from the boundaries, we integrate this
with respect to $\int di=\sqrt{L}\int_{-\infty}^{\infty}d\beta$ to
get $K(L,\alpha)\approx\frac{3^{L+1/2}}{2\sqrt{\pi L}}\exp\left\{ -\frac{3}{4}\alpha^{2}\right\} .$
This re-expressed in terms of $p$ gives 
\begin{equation}
M_{L,p}\approx\frac{3^{L+1/2}}{2\sqrt{\pi L}}\exp\left[-\frac{3}{4}\frac{p^{2}}{L}\right],\label{eq:Intermediate3}
\end{equation}
which is independent of $m$ as expected. Lemma \eqref{lem:IgnoreBadWalks},
as well as, Eqs. \eqref{eq:Intermediate0}, \eqref{eq:Intermediate1}
and \ref{eq:Intermediate3} are the main results used to derive analytical
formulas for the two-point function and block entanglement entropy
below. Putting these in Eq. \eqref{eq:Two_point_physical} and canceling
constants we get 
\begin{eqnarray}
\langle\widehat{m}_{n_{1}}\widehat{m}_{n_{2}}\rangle & = & \frac{1}{T}\sum_{m=0}^{b}\sum_{p=-L}^{L}m^{2}(m+p)^{2}\exp\left[-\frac{3}{4}\frac{p^{2}}{L}\right]\exp\left[-\frac{3}{4b}\left(m^{2}+\left(m+p\right)^{2}\right)\right],\label{eq:UseFul_Formula}\\
T & \equiv & \sum_{m=0}^{b}\sum_{p=-L}^{L}m(m+p)\exp\left[-\frac{3}{4}\frac{p^{2}}{L}\right]\exp\left[-\frac{3}{4b}\left(m^{2}+\left(m+p\right)^{2}\right)\right].\nonumber 
\end{eqnarray}

Because of the exponential suppression and Lemma \eqref{Lem:integralSum} the limits of
the sums are extended and the sums approximated by integrals. Using the substitution $m\equiv\theta\sqrt{b}$
and $p=\rho\sqrt{L}$ we have
\begin{eqnarray*}
\langle\widehat{m}_{n_{1}}\widehat{m}_{n_{2}}\rangle & \approx & \frac{b}{T'}\int_{-\infty}^{+\infty}d\rho\int_{0}^{\infty}d\theta\mbox{ }\mbox{ }K(\theta,\rho)\mbox{ }\theta\left(\theta+\rho\sqrt{\frac{L}{b}}\right)\\
K(\theta,\rho) & \equiv & \theta b(\theta+\rho\sqrt{\frac{L}{b}})\exp\left[-\frac{3}{4}\rho^{2}\right]\exp\left[-\frac{3}{4}\left(\theta^{2}+\left(\theta+\rho\sqrt{\frac{L}{b}}\right)^{2}\right)\right]\\
T' & \equiv & \int_{-\infty}^{+\infty}d\rho\int_{0}^{\infty}d\theta\mbox{ }K(\theta,\rho)\quad.
\end{eqnarray*}

The integrals are elementary and we integrate over $\theta\in[0,\infty)$
and $\rho\in(-\infty,+\infty)$ to obtain (recall $b=n-L/2$)
\begin{eqnarray}
\langle\widehat{m}_{n-\frac{L}{2}}\widehat{m}_{n+\frac{L}{2}}\rangle & \approx & \frac{b}{1+L/2b}+\frac{2}{3}L=n-\frac{L}{3}+\frac{L^{2}}{4n}\label{eq:2Pt_Final}\\
T & \approx & \frac{4\pi}{9}b^{3}\sqrt{\frac{L}{(2b+L)^{3}}}.\nonumber 
\end{eqnarray}
Comment: From Eq. \eqref{eq:Magnetization}, to the leading order we
have (also compare with Eq. \eqref{eq:Connected_Physical})
\[
\langle\widehat{m}_{n-\frac{L}{2}}\widehat{m}_{n+\frac{L}{2}}\rangle-\langle\widehat{m}_{n-\frac{L}{2}}\rangle\langle\widehat{m}_{n+\frac{L}{2}}\rangle\approx n\left(1-\frac{8}{3\pi}\right).
\]

Recall that $L\equiv n_{2}-n_{1}$, and let $f(n_1,n_2)\equiv \langle\widehat{m}_{n_1}\widehat{m}_{n_2}\rangle$, then the exact value of $\langle s_{n_{1}}^{z}s_{n_{2}}^{z}\rangle$ is given by the bivariate finite difference equation 
\begin{eqnarray}
\langle s_{n_{1}}^{z}s_{n_{2}}^{z}\rangle=\frac{1}{4}\left[f(n_1+1,n_2+1)-f(n_1+1,n_2-1)-f(n_1-1,n_2+1)+f(n_1-1,n_2-1)\right]\label{eq:bivariateTwopoint},
\end{eqnarray}
which  using Eq. \eqref{eq:2Pt_Final} is identically zero. So we have
\begin{eqnarray}
\langle s_{n_{1}}^{z}s_{n_{2}}^{z}\rangle=0
\label{eq:2pt_spinCorr}
\end{eqnarray}

Comment: If we were to approximate $n_1$ and $n_2$ with continuous variables then 
 $\langle s_{n_{1}}^{z}s_{n_{2}}^{z}\rangle\approx\frac{\partial^{2}}{\partial n_{1}\partial n_{n}}\langle\widehat{m}_{n_1}\widehat{m}_{n_2}\rangle$,
which at $n_{1}=n-\frac{L}{2}$ and $n_{2}=n+\frac{L}{2}$ evaluates
to be $ \langle s_{n-\frac{L}{2}}^{z}s_{n+\frac{L}{2}}^{z}\rangle\approx-\frac{1}{2n}$. As with the first derivatives the extension to the continuous variables gives vanishing errors that are $\Theta(1/n)$. Eq. \eqref{eq:2pt_spinCorr}  gives a good agreement against  DMRG calculations. DMRG calculations show that $ \langle s_{n-\frac{L}{2}}^{z}s_{n+\frac{L}{2}}^{z}\rangle$  also vanishes \cite{AdrianChat} in the colored Motzkin model as well \cite{movassagh2016supercritical}.

\subsection{\label{sub:BrownianExcursions}Correlation functions in the Brownian
excursions limit}
In this section we derive the correlation function in the Excursion
limit discussed in the Subsection \ref{sub:Discussion-of-limits}.
The derivations below serve as both an alternative derivation of some
of the formulas derived above,  and derivation of new formulas in
the limit that $L$ tends to infinity simultaneously with $n$. 

In the limit of $n\rightarrow\infty$  the random walk converges to
a Wiener process \cite{prokhorov1956convergence} and a random Motzkin
walk converges to a Brownian excursion \cite{kaigh1976invariance},
denoted by $e(\lambda)$. Mathematically, for $0<\lambda<1$ and $n_{1}\equiv2n\lambda$
\[
\frac{m_{2n\lambda}}{\sqrt{2n\sigma^{2}}}\rightarrow e(\lambda)
\]
where $\sigma^{2}=2/3$ \cite{kaigh1976invariance}. For $\lambda\in(0,1)$,
the probability density of $e(\lambda)$ is \cite{billingsley2013convergence}
\[
f_{\lambda}(x)=2x^{2}\frac{\exp\left[-\frac{x^{2}}{2\lambda(1-\lambda)}\right]}{\sqrt{2\pi\lambda^{3}(1-\lambda)^{3}}}\mathbb{I}_{x\ge0}\quad.
\]
\begin{rem}
Below we denote the expectations with respect to this density by $\mathbb{E}[\centerdot]$,
in contrast to $\langle\centerdot\rangle$, which was used to denote
the expectation with respect to a uniform superposition of all Motzkin
walks. 
\end{rem}
The first two moments of the height are (we do not put hats on $m_{2n\lambda}$
as it is not a quantum operator anymore) 
\begin{eqnarray}
\mathbb{E}[m_{2n\lambda}] & \sim & 2\sqrt{2n\sigma^{2}}\int_{0}^{\infty}dx\mbox{ }x^{3}\frac{\exp\left[-\frac{x^{2}}{2\lambda(1-\lambda)}\right]}{\sqrt{2\pi\lambda^{3}(1-\lambda)^{3}}}=4\sqrt{n}\mbox{ }\sqrt{\frac{2\lambda(1-\lambda)}{3\pi}}\label{eq:E_m_Excur}\\
\mathbb{E}[m_{2n\lambda}^{2}] & \sim & 4n\sigma^{2}\int_{0}^{\infty}dx\mbox{ }x^{4}\frac{\exp\left[-\frac{x^{2}}{2\lambda(1-\lambda)}\right]}{\sqrt{2\pi\lambda^{3}(1-\lambda)^{3}}}=4n(1-\lambda)\lambda\quad.\label{eq:E_m2_Excur}
\end{eqnarray}

Comment: Eq. \eqref{eq:E_m_Excur} coincides with Eq. \eqref{eq:Magnetization}
with the substitution $n_{1}=2\lambda n$. 

Therefore, from the theory of Brownian excursions, we have 
\begin{equation}
\mathbb{E}[m_{2n\lambda}^{2}]-\mathbb{E}^{2}[m_{2n\lambda}]=n\left[4\lambda(1-\lambda)\left(1-\frac{8}{3\pi}\right)\right]\label{eq:connectedCompExcursions}
\end{equation}
Comment: For $\lambda=1/2$, $\mathbb{E}\left[m_{n}^{2}\right]=n$,
which is confirmed to the leading order by our earlier derivations:
\[
\langle\widehat{m}_{n-\frac{L}{2}}\widehat{m}_{n+\frac{L}{2}}\rangle\approx\langle\widehat{m}_{n}^{2}\rangle=n\frac{\int_{0}^{\infty}\alpha^{4}\exp\left(-3\alpha^{2}/2\right)}{\int_{0}^{\infty}\alpha^{2}\exp\left(-3\alpha^{2}/2\right)}=n\quad.
\]

With overwhelming probability the Motzkin walks satisfy $m_{n\pm L/2}\in[m_{n}-C\sqrt{L\mbox{ }\log n}\mbox{ },\mbox{ }m_{n}+C\sqrt{L\mbox{ }\log n}]$
\cite{billingsley2013convergence}. Before we assumed $L$ to be smaller
than all asymptotically increasing functions of $n$. As long as $L=o(n/\log n)$,
the corrections are negligible and we indeed have \cite{billingsley2013convergence,JK_Chat}.
\[
\mathbb{E}[m_{n-\frac{L}{2}}m_{n+\frac{L}{2}}]\approx\mathbb{E}[m_{n}^{2}]=n\quad.
\]

This, yet again, confirms the leading order asymptotic given by Eq.
\ref{eq:2Pt_Final}. Moreover, one does not expect the connected component
of the correlation vanish. Mathematically, we have (either using Eq.
\ref{eq:connectedCompExcursions} with $t=1/2$ or alternatively using
Eqs. \eqref{eq:Magnetization} and \eqref{eq:2Pt_Final}) 
\begin{equation}
\langle\widehat{m}_{n-\frac{L}{2}}\widehat{m}_{n+\frac{L}{2}}\rangle-\langle\widehat{m}_{n-\frac{L}{2}}\rangle\langle\widehat{m}_{n+\frac{L}{2}}\rangle\approx\langle\widehat{m}_{n}^{2}\rangle-\langle\widehat{m}_{n}\rangle^{2}=n\left(1-\frac{8}{3\pi}\right)\quad\mbox{Physical limit}.\label{eq:Connected_Physical}
\end{equation}

In the Excursion limit discussed in Sec. \ref{sub:Discussion-of-limits},
where $n_{1}=2\lambda n$ and $n_{2}=2\mu n$ with $0<\lambda<\mu<1$,
the quantity $L=2n(\mu-\lambda)$ simultaneously tend to infinity
with $n$. In this limit, unlike the physical limit, the number of
``bad'' walks are not negligible. The $2-$point function is still
given by Eq. \eqref{eq:Two_point_physical} and Eq. \eqref{eq:Intermediate0}.
In this limit, and in Eqs. \eqref{eq:Two_point_physical} and \eqref{eq:MotzkinBlockEntanglementEntropy},
the sums are \textit{not} well approximated if we take $p\in[-L,L]$.
We would have to use the limits as in Eq. \eqref{eq:Two_point_physical}. 

We want to calculate $\mathbb{E}[m_{2n\lambda}m_{2n\mu}]-\mathbb{E}[m_{2n\lambda}]\mathbb{E}[m_{2n\mu}]$.
The density for a Brownian excursion to to visit $(\lambda,x_{1})$
and $(\mu,x_{2})$ is \cite{billingsley2013convergence}
\[
f_{\lambda,\mu}(x_{1},x_{2})=2\sqrt{2\pi}p_{0}(\lambda,x_{1})\mbox{ }p(\lambda,x_{1},\mu,x_{2})\mbox{ }p_{0}(1-\mu,x_{2})
\]
where 
\begin{eqnarray*}
p_{0}(\lambda,x_{1}) & = & \frac{x_{1}e^{-\frac{x_{1}^{2}}{2\lambda}}}{\sqrt{2\pi}\lambda^{3/2}}\mathbb{I}_{x_{1}\ge0}\\
p(\lambda,x_{1},\mu,x_{2}) & = & \frac{\exp\left[-\frac{(x_{1}-x_{2})^{2}}{2(\mu-\lambda)}\right]-\exp\left[-\frac{(x_{1}+x_{2})^{2}}{2(\mu-\lambda)}\right]}{\sqrt{2\pi(\mu-\lambda)}}\mathbb{I}_{x_{1}\ge0}\mathbb{I}_{x_{2}\ge0}.
\end{eqnarray*}

Therefore we have for $0<\lambda<\mu<1$ 
\begin{eqnarray*}
f_{\lambda,\mu}(x_{1},x_{2}) & = & 2\sqrt{2\pi}\frac{x_{1}e^{-\frac{x_{1}^{2}}{2\lambda}}}{\sqrt{2\pi}\lambda^{3/2}}\mbox{}\frac{\exp\left[-\frac{(x_{1}-x_{2})^{2}}{2(\mu-\lambda)}\right]-\exp\left[-\frac{(x_{1}+x_{2})^{2}}{2(\mu-\lambda)}\right]}{\sqrt{2\pi(\mu-\lambda)}}\mbox{ }\frac{x_{2}e^{-\frac{x_{2}^{2}}{2(1-\mu)}}}{\sqrt{2\pi}(1-\mu)^{3/2}}\mathbb{I}_{x_{1}\ge0}\mathbb{I}_{x_{2}\ge0}.
\end{eqnarray*}

Comment: One directly verifies that $\int_{0}^{\infty}dx_{1}\int_{0}^{\infty}dx_{2}\mbox{ }f_{\lambda,\mu}(x_{1},x_{2})=1$
as expected. 

We are interested in finding
\[
\mathbb{E}[m_{2n\lambda}m_{2n\mu}]-\mathbb{E}[m_{2n\lambda}]\mathbb{E}[m_{2n\mu}]
\]
Since the Motzkin walk is over $2n$ steps, the expectations ($\sigma^{2}=2/3$)
would be given by 
\begin{eqnarray*}
\mathbb{E}[m_{2n\lambda}m_{2n\mu}] & = & 2n\sigma^{2}\int_{0}^{\infty}dx_{1}\int_{0}^{\infty}dx_{2}\mbox{ }x_{1}x_{2}\mbox{ }f_{\lambda,\mu}(x_{1},x_{2})\\
\mathbb{E}[m_{2n\lambda}]\mathbb{E}[m_{2n\mu}] & = & 2n\sigma^{2}\int_{0}^{\infty}dx\mbox{ }2x^{3}\frac{\exp\left[-\frac{x^{2}}{2\lambda(1-\lambda)}\right]}{\sqrt{2\pi\lambda^{3}(1-\lambda)^{3}}}\int_{0}^{\infty}dx\mbox{ }2x^{3}\frac{\exp\left[-\frac{x^{2}}{2\mu(1-\mu)}\right]}{\sqrt{2\pi\mu^{3}(1-\mu)^{3}}}
\end{eqnarray*}

Direct computation of these gives
\begin{eqnarray*}
\mathbb{E}[m_{2n\lambda}m_{2n\mu}] & = & \frac{4n\sigma^{2}}{\pi}\left\{ 3\sqrt{\lambda(1-\mu)(\mu-\lambda)}+\left[\lambda(2-3\mu)+\mu\right]\arctan\left(\sqrt{\frac{\lambda(1-\mu)}{\mu-\lambda}}\right)\right\} \\
\mathbb{E}[m_{2n\lambda}]\mathbb{E}[m_{2n\mu}] & = & \frac{16n\sigma^{2}}{\pi}\sqrt{\lambda\mu(1-\mu)(1-\lambda)}
\end{eqnarray*}
We see that the disconnected components do \textit{not} cancel, 
\begin{eqnarray*}
\mathbb{E}[m_{2n\lambda}m_{2n\mu}]-\mathbb{E}[m_{2n\lambda}]\mathbb{E}[m_{2n\mu}] & = & \frac{4n\sigma^{2}}{\pi}\left\{ 3\sqrt{\lambda(1-\mu)(\mu-\lambda)}+\left[\lambda(2-3\mu)+\mu\right]\arctan\left(\sqrt{\frac{\lambda(1-\mu)}{\mu-\lambda}}\right)\right.\\
 &  & \left.-4\sqrt{\lambda\mu(1-\mu)(1-\lambda)}\right\} \qquad\mbox{Excursion limit}.
\end{eqnarray*}

Now restoring back $n_{1}=2n\lambda$ and $n_{2}=2n\mu$ we find that
even in this limit 
\begin{equation}
\langle s_{n_{1}}^{z}s_{n_{2}}^{z}\rangle=\frac{\partial^{2}}{\partial n_{1}\partial n_{2}}\mathbb{E}[m_{n_{1}}m_{n_{2}}]=\mathcal{O}(n^{-1})\qquad\mbox{Excursion limit}.\label{eq:SzSz_Excursion}
\end{equation}

Comment: The connected component also follows the same asymptotic
scaling, i.e., $\frac{\partial^{2}}{\partial n_{1}\partial n_{2}}\left\{ \mathbb{E}[m_{n_{1}}m_{n_{2}}]-\mathbb{E}[m_{n_{1}}]\mathbb{E}[m_{n_{2}}]\right\} =\mathcal{O}(n^{-1})$.
\section{Entanglement entropies and Schmidt ranks}
In \cite{Movassagh2012_brackets}, it was shown that the half-chain
von Neumann entanglement entropy and Schmidt rank are 
\begin{eqnarray*}
S_{n} & = & \frac{1}{2}\log_{2}n+\left(\gamma-\frac{1}{2}\right)\log_{2}e+\frac{1}{2}\log_{2}\left(\frac{2\pi}{3}\right)\quad\mbox{bits},\\
\chi & = & n+1.
\end{eqnarray*}
Below we calculate the bipartite entanglement entropy about any \textit{cut,}
$1\ll n_{1}\ll2n$. 
\subsection{Bipartite entanglement about an arbitrary cut}
Suppose we cut the chain into two parts $A$ and $B$, where $A$
consists of the first $n_{1}$ spins and $B$ the remaining $2n-n_{1}$.
We first show that the Schmidt decomposition of the ground state is
\begin{eqnarray}
|{\cal M}_{2n}\rangle & = & \sum_{m=0}^{b}\sqrt{p_{m}}\mbox{ }|C_{n_{1},0,m}\rangle\otimes|C_{2n-n_{1},m,0}\rangle\label{eq:Bipartite_Motzkin_raw}\\
p_{m} & = & \frac{M_{n_{1},0,m}M_{2n-n_{1},m,0}}{\sum_{m=0}^{n}M_{n_{1},0,m}M_{2n-n_{1},m,0}},\nonumber 
\end{eqnarray}
where $C_{\ell,p,q}$ is a \textit{normalized} uniform superposition
of non-negative ``Motzkin'' walks on $\ell$ steps starting at height
$p$ and ending at height $q$. 

We can organize the Motzkin walks based on the height they have at
the site $n_{1}$ and denoted that height by $m$. So Eq. \eqref{eq:GS}
is equivalent to (recall that $b=\min(n_{1},2n-n_{1})$) 
\[
|{\cal M}_{2n}\rangle=\frac{1}{\sqrt{N}}\sum_{m=0}^{b}\sum_{s_{m}\in\mbox{Motzkin}}|s_{m}\rangle
\]
 where $s_{m}$ is a Motzkin walk that attains the height $m$ at
site $n_{1}$. From this expression we see that the Schmidt rank denoted
by $\chi_{n_{1}}$ is 
\begin{equation}
\chi_{n_{1}}=b+1\mbox{ }.\label{eq:chi_n1}
\end{equation}

For any given $m$, the sum $\sum_{s_{m}}|s_{m}\rangle=\sum_{x=1}^{M_{n_{1},0,m}}|w_{n_{1},0,m}^{x}\rangle\otimes\sum_{y=1}^{M_{2n-n_{1},0,m}}|w_{2n-n_{1},m,0}^{y}\rangle$,
where $\sum_{x=1}^{M_{\ell,u,v}}|w_{\ell,u,v}^{x}\rangle$ is the
\textit{un}normalized sum over all non-negative walks on $\ell$ steps
starting from height $u$ and ending at height $v$. We have 
\begin{equation}
|{\cal M}_{2n}\rangle=\frac{1}{\sqrt{N}}\sum_{m=0}^{b}\left(\sum_{x=1}^{M_{n_{1},0,m}}|w_{n_{1},0,m}^{x}\rangle\otimes\sum_{y=1}^{M_{2n-n_{1},0,m}}|w_{2n-n_{1},m,0}^{y}\rangle\right)\label{eq:Bipartite_Motzkin}
\end{equation}

The reduced density matrix about the cut made at $n_{1}$ is 
\[
\rho_{\mbox{cut}}\equiv\mbox{Tr}_{(1\cdots n_{1})}\left[\rho\right]=\frac{1}{N}\sum_{m=0}^{b}\sum_{m'=0}^{b}\left(\sum_{x=1}^{M_{n_{1},0,m}}\sum_{u=1}^{M_{n_{1},0,m'}}\langle w_{n_{1},0,m'}^{u}|w_{n_{1},0,m}^{x}\rangle\right)\otimes\left(\sum_{y=1}^{M_{2n-n_{1},0,m}}|w_{2n-n_{1},m,0}^{y}\rangle\otimes\sum_{v=1}^{M_{2n-n_{1},0,m'}}\langle w_{2n-n_{1},m',0}^{v}|\right)
\]
But any two distinct walks must be orthogonal so 
\[
\sum_{x=1}^{M_{n_{1},0,m}}\sum_{u=1}^{M_{n_{1},0,m'}}\langle w_{n_{1},0,m'}^{u}|w_{n_{1},0,m}^{x}\rangle=\sum_{x=1}^{M_{n_{1},0,m}}\sum_{u=1}^{M_{n_{1},0,m'}}\delta_{w_{n,0,m'}^{u},w_{n,0,m}^{x}}\delta_{m,m'}=M_{n_{1},0,m}\delta_{m,m'}.
\]
We infer that 
\begin{eqnarray*}
\rho_{\mbox{cut}} & = & \frac{1}{N}\sum_{m=0}^{b}\sum_{m'=0}^{b}\left(M_{n_{1},0,m}\delta_{m,m'}\right)\otimes\left(\sum_{y=1}^{M_{2n-n_{1},0,m}}|w_{2n-n_{1},m,0}^{y}\rangle\otimes\sum_{v=1}^{M_{2n-n_{1},0,m'}}\langle w_{2n-n_{1},m',0}^{v}|\right)\\
 & = & \frac{1}{N}\sum_{m=0}^{n}M_{n,0,m}\left(\sum_{y=1}^{M_{2n-n_{1},0,m}}|w_{2n-n_{1},m,0}^{y}\rangle\otimes\sum_{v=1}^{M_{2n-n_{1},0,m}}\langle w_{2n-n_{1},m,0}^{v}|\right)
\end{eqnarray*}
Since $\sum_{x=1}^{M_{2n-n_{1},0,m}}|w_{2n-n_{1},m,0}^{y}\rangle=\sum_{v=1}^{M_{2n-n_{1},0,m}}|w_{2n-n_{1},m,0}^{v}\rangle=\sqrt{M_{2n-n_{1},m,0}}|C_{2n-n_{1},m,0}\rangle$,
we now have 
\[
\rho_{\mbox{cut}}=\frac{1}{N}\sum_{m=0}^{b}M_{n_{1},0,m}M_{2n-n_{1},m,0}|C_{2n-n_{1},m,0}\rangle\langle C_{2n-n_{1},m,0}|.
\]

This is the desired result where the reduced density matrix is diagonal
in the basis $|C_{2n-n_{1},m,0}\rangle$. The von Neumann entanglement
entropy is 
\begin{eqnarray}
S_{\mbox{cut}} & = & -\mbox{Tr}\left[\rho_{\mbox{cut}}\log\rho_{\mbox{cut}}\right]=-\sum_{m=0}^{b}p_{m}\log p_{m},\label{eq:Scut_Exact}\\
p_{m} & = & \frac{M_{n_{1},0,m}M_{2n-n_{1},m,0}}{N}\quad.\nonumber 
\end{eqnarray}

Using the asymptotic expressions given by Eq. \eqref{eq:Mnms} and canceling constants $p_{m}$ reads
\begin{align}
p_{m}= & \frac{m^{2}\exp\left\{ -\frac{3m^{2}}{4}\left[\frac{1}{n_{1}}+\frac{1}{2n-n_{1}}\right]\right\} }{\sum_{m=0}^{b}m^{2}\exp\left\{ -\frac{3m^{2}}{4}\left[\frac{1}{n_{1}}+\frac{1}{2n-n_{1}}\right]\right\} }.\label{eq:Magnetization_Exact-1}
\end{align}
Because of Lemma \eqref{Lem:integralSum} we can approximate the sums with integrals and extend the upper limit.
To get better estimates, let $m=\alpha\sqrt{n_{1}}$, and
\begin{eqnarray*}
S_{\mbox{cut}} & \approx & -\frac{1}{T'}\int_{0}^{\infty}d\alpha\mbox{ }\alpha^{2}\exp\left[-\frac{3\alpha^{2}}{4}\left(\frac{2n}{2n-n_{1}}\right)\right]\log\left\{ \frac{1}{\sqrt{n_{1}}\mbox{ }T'}\alpha^{2}\exp\left[-\frac{3\alpha^{2}}{4}\left(\frac{2n}{2n-n_{1}}\right)\right]\right\} \\
T' & \approx & \int_{0}^{\infty}d\alpha\mbox{ }\alpha^{2}\exp\left\{ -\frac{3\alpha^{2}}{4}\left[\frac{2n}{2n-n_{1}}\right]\right\} 
\end{eqnarray*}
These integrals are evaluated to give the desired result 
\begin{figure}
\begin{centering}
\includegraphics[scale=0.4]{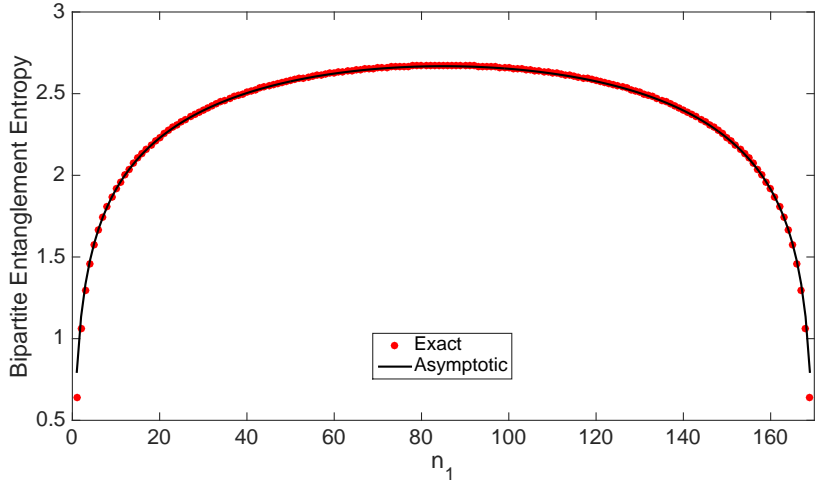}
\par\end{centering}
\caption{\label{fig:Bipartite-entanglement-entropy,}Bipartite entanglement
entropy, $S_{\mbox{cut}}$ vs. the location of the cut, $n_{1}$,
on a chain of length $2n=170$. Comparison of the asymptotic results
(Eq. \eqref{eq:Scut}) with the exact sum given by Eq. \eqref{eq:Scut_Exact}.}
\end{figure}
\begin{eqnarray}
S_{\mbox{cut}} & \approx & \frac{1}{2}\log\left[\frac{n_{1}\left(2n-n_{1}\right)}{n}\right]+\gamma-\frac{1}{2}+\frac{1}{2}\log\left(\frac{2\pi}{3}\right)\qquad\mbox{nats}\label{eq:Scut}\\
 & = & \frac{1}{2}\log_{2}\left[\frac{n_{1}\left(2n-n_{1}\right)}{n}\right]+\left(\gamma-\frac{1}{2}\right)\log_{2}e+\frac{1}{2}\log_{2}\left(\frac{2\pi}{3}\right)\quad\mbox{bits}\nonumber 
\end{eqnarray}
where $\gamma$ is  Euler's constant. 

Note that if we set $n_{1}=n$, we recover the half-chain entanglement
entropy formula in \cite{Movassagh2012_brackets}. As an illustration
in Fig. \eqref{fig:Bipartite-entanglement-entropy,}, we plot Eq.
\ref{eq:Scut} vs. $n_{1}$, for a chain of length $170$ and compare
it with the exact entropy (Eq. \eqref{eq:Scut_Exact}), where the sum over trinomials (Eq. \eqref{eq:Mnms}) were used
to obtain $M_{n_{1},0,m}$ and $M_{2n-n_{1},m,0}$. 

We now calculate the R\'enyi entropy defined by ($\kappa>0$)
\begin{equation}
S_{\mbox{cut}}^{\kappa}\equiv\frac{1}{1-\kappa}\log\left[\mbox{Tr}\left(\rho_{\mbox{cut}}^{\kappa}\right)\right].\label{eq:RenyiCut}
\end{equation}
A very similar calculation as above gives 
\begin{eqnarray}
S_{\mbox{cut}}^{\kappa}=\frac{1}{1-\kappa}\log\sum_{m=0}^{b}p_{m}^{\kappa} & \approx & \frac{1}{2}\log\left[\frac{n_{1}\left(2n-n_{1}\right)}{n}\right]+f(\kappa)\quad\mbox{nats}\label{eq:Scut_Renyi}\\
 & = & \frac{1}{2}\log_{2}\left[\frac{n_{1}\left(2n-n_{1}\right)}{n}\right]+f(\kappa)\log_{2}e\quad\mbox{bits},\\
f(\kappa) & \equiv & \frac{\log\left[\Gamma\left(\kappa+1/2\right)\right]}{1-\kappa}-\frac{1}{2\left(1-\kappa\right)}\left\{ (1+2\kappa)\log\kappa+\kappa\log(\frac{\pi}{24})+\log6\right\} .\nonumber 
\end{eqnarray}

Comment: Indeed $\lim_{\kappa\rightarrow1}S_{\mbox{cut}}^{\kappa}=S_{\mbox{cut}}$
recovers the von Neumann entropy as expected. 
\begin{rem}
Since we have analytically diagonalized the reduced density matrix,
we can identify the Entanglement Hamiltonian defined by \cite{li2008entanglement}
$\rho_{\mbox{cut}}=\exp\left(-\frac{H_{\mbox{cut}}}{T}\right)$. $T$
here denotes the temperature and we have
\begin{eqnarray*}
H_{\mbox{cut}} & = & -T\mbox{ }\log\rho_{\mbox{cut}}=T\log N-T\sum_{m=0}^{b}\log\left[M_{n_{1},0,m}M_{2n-n_{1},m,0}\right]|C_{2n-n_{1},m,0}\rangle\langle C_{2n-n_{1},m,0}|
\end{eqnarray*}
From Eq. \eqref{eq:Magnetization_Exact-1} (subtracting an overall  constant)
we have 
\begin{eqnarray}
H_{\mbox{cut}} & \approx & T\sum_{m=0}^{b}E_{m}|C_{2n-n_{1},m,0}\rangle\langle C_{2n-n_{1},m,0}|\label{eq:Hcut}\\
E_{m} & \equiv & \frac{3m^{2}}{4}\left[\frac{1}{n_{1}}+\frac{1}{2n-n_{1}}\right]-2\mbox{ }\log m.\nonumber 
\end{eqnarray}
It would be interesting if a local Hamiltonian can be identified that
has $E_{m}$ as its spectrum. 
\end{rem}
\subsection{\label{sub:Block-Entanglement-Entropy}Block entanglement}
We now turn to the entanglement entropy of $L-$consecutive spins
centered in the middle of the chain. Let the first $b$ spins be the
subsystem $A$, the next $L$ spins the subsystem $B$ and the remaining
$b$ spins subsystem $C$, i.e., $2n=2b+L$. Since most of the mass
in the summation at the boundaries of $B$ is concentrated around
$m\propto\sqrt{b}$, and $L\ll b$, we re-express Eq. \eqref{eq:GS}
equivalently as
\begin{equation}
|{\cal M}_{2n}\rangle=\frac{1}{\sqrt{N}}\sum_{m=0}^{b}\sum_{p=-L}^{+L}\left(\sum_{x=1}^{M_{b,0,m}}|w_{b,0,m}^{x}\rangle_{A}\otimes\sum_{y=1}^{M_{L,m,m+p}}|w_{L,m,m+p}^{y}\rangle_{B}\otimes\sum_{z=1}^{M_{b,0,m+p}}|w_{b,m+p,0}^{z}\rangle_{C}\right)\label{eq:MotzkinBlockEntanglementEntropy}
\end{equation}
where as before $\sum_{x=1}^{M_{\ell,u,v}}|w_{\ell,u,v}^{x}\rangle$
is the \textit{un}normalized sum over all non-negative walks on $\ell$
steps starting from height $u$ and ending at height $v$. Because
of Lemma \eqref{lem:IgnoreBadWalks} we write 
\begin{equation}
|{\cal M}_{2n}\rangle\approx\frac{1}{\sqrt{N}}\sum_{m=0}^{b}\sum_{p=-L}^{+L}\left(\sum_{x=1}^{M_{b,0,m}}|w_{b,0,m}^{x}\rangle_{A}\otimes\sum_{y=1}^{M_{L,p}}|w_{L,p}^{y}\rangle_{B}\otimes\sum_{z=1}^{M_{b,0,m+p}}|w_{b,m+p,0}^{z}\rangle_{C}\right)\label{eq:MotzkinBlockEntanglementEntropyApprox}
\end{equation}

The reduced density matrix of the $L-$middle spins is $\rho_{B}=\mbox{Tr}_{A,C}\left[\rho\right]$
\begin{eqnarray*}
\rho_{B} & = & \frac{1}{N}\sum_{m,m'=0}^{b}\sum_{p,p'=-L}^{+L}\left\{ \left(\sum_{x=1}^{M_{b,0,m}}\sum_{u=1}^{M_{b,0,m'}}\mbox{}_{A}\langle w_{b,0,m'}^{u}|w_{b,0,m}^{x}\rangle_{A}\right)\left(\sum_{v=1}^{M_{b,0,m'+p'}}\sum_{z=1}^{M_{b,0,m+p}}\mbox{}_{C}\langle w_{b,m'+p',0}^{v}|w_{b,m+p,0}^{z}\rangle_{C}\right)\right.\\
 &  & \quad\left.\left(\sum_{y=1}^{_{M_{L,p}}}|w_{L,p}^{y}\rangle_{B}\otimes\sum_{k=1}^{M_{L,p'}}\mbox{}_{B}\langle w_{L,p'}^{k}|\right)\right\} 
\end{eqnarray*}
But 
\begin{eqnarray*}
\sum_{x=1}^{M_{b,0,m}}\sum_{u=1}^{M_{b,0,m'}}\mbox{}_{A}\langle w_{b,0,m'}^{u}|w_{b,0,m}^{x}\rangle_{A} & = & \sum_{x=1}^{M_{b,0,m}}\sum_{u=1}^{M_{b,0,m'}}\delta_{u,x}\delta_{m,m'}=M_{b,0,m}\delta_{m,m'},
\end{eqnarray*}
and similarly
\begin{eqnarray*}
\sum_{z=1}^{M_{b,0,m+p}}\sum_{v=1}^{M_{b,0,m'+p'}}\mbox{}_{C}\langle w_{b,m'+p',0}^{v}|w_{b,m+p,0}^{z}\rangle_{C} & = & M_{b,0,m+p}\delta_{m+p,m'+p'}\quad.
\end{eqnarray*}
Hence we have 
\begin{eqnarray*}
\rho_{B} & = & \frac{1}{N}\sum_{m,m'=0}^{b}\sum_{p,p'=-L}^{L}M_{b,0,m}M_{b,m+p,0}\delta_{m,m'}\delta_{m+p,m'+p'}\mbox{ }\sum_{y=1}^{_{M_{L,p}}}|w_{L,p}^{y}\rangle_{B}\otimes\sum_{k=1}^{M_{L,p'}}\mbox{}_{B}\langle w_{L,p'}^{k}|\\
 & = & \frac{1}{N}\sum_{p=-L}^{L}M_{L,p}\sum_{m=0}^{b}M_{b,0,m}M_{b,m+p,0}\mbox{ }|C_{L,p}\rangle_{B}\langle C_{L,p}|
\end{eqnarray*}
because $\sum_{y=1}^{M_{L,m,m+p}}|w_{L,p}^{y}\rangle=\sum_{k=1}^{M_{L,p}}|w_{L,p}^{k}\rangle=\sqrt{M_{L,p}}\mbox{ }|C_{L,p}\rangle$.
We derived the asymptotic forms of these in Eqs. \eqref{eq:Intermediate0},
\eqref{eq:Intermediate1} and \ref{eq:Intermediate3}. Using what ultimately
lead to Eq. \eqref{eq:UseFul_Formula} we have 
\begin{eqnarray*}
\rho_{B} & = & \frac{1}{T}\sum_{p=-L}^{L}\mbox{ }\sum_{m=0}^{b}K_{m,p}|C_{L,p}\rangle\langle C_{L,p}|\\
K_{m,p} & \equiv & m(m+p)\exp\left[-\frac{3}{4}\frac{p^{2}}{L}\right]\exp\left[-\frac{3}{4b}\left(m^{2}+\left(m+p\right)^{2}\right)\right]\\
T & \equiv & \sum_{p=-L}^{L}\mbox{ }\sum_{m=0}^{b}K_{m,p}\quad.
\end{eqnarray*}
Because $p\ll m$, $K_{m,p}\approx m^{2}\exp\left[-\frac{3}{4}\frac{p^{2}}{L}\right]\exp\left[-\frac{3m^{2}}{2b}\right]$
and we have\footnote{Without making this approximation in Eq. \eqref{eq:Entropy_Final} we
would find $S_{L}\approx\frac{1}{2}\log L+\log\left(2\sqrt{\frac{\pi}{3}}\right)+\frac{1}{2}-\frac{3}{4}\frac{L}{b}-\frac{9}{16}\left(\frac{L}{b}\right)^{2}+\mathcal{O}\left(\frac{L}{b}\right)^{3}$.}
\begin{eqnarray}
\rho_{B} & \approx & \frac{\sum_{p=-L}^{L}\exp\left(-\frac{3}{4}\frac{p^{2}}{L}\right)\mbox{ }|C_{L,p}\rangle\langle C_{L,p}|}{\sum_{p=-L}^{L}\exp\left(-\frac{3}{4}\frac{p^{2}}{L}\right)}\quad.\label{eq:rhoB}
\end{eqnarray}
We have diagonalized the reduced density matrix in $|C_{L,p}\rangle$
basis and the eigenvalues (i.e., Schmidt numbers) are 
\begin{equation}
\lambda_{p}=\frac{\exp\left(-\frac{3}{4}\frac{p^{2}}{L}\right)}{\sum_{p=-L}^{L}\exp\left(-\frac{3}{4}\frac{p^{2}}{L}\right)}\quad.\label{eq:Lambda}
\end{equation}

Before obtaining asymptotic for the entanglement entropy, from Eq.
\ref{eq:rhoB} we find that the Schmidt rank of the $L$ consecutive
spins, denoted by $\chi_{L}$, is 
\begin{equation}
\chi_{L}=2L+1\mbox{ }.\label{eq:chi_L}
\end{equation}

To make the integrals $\mathcal{O}(1)$, we let $p=\sqrt{L}\rho$
and approximate the sums with integrals over $dp=\sqrt{L}\mbox{ }d\rho$.
Since the maxima is at zero and $L\gg1$, we can extend the limits
of the integral. The von Neumann entanglement entropy of the $L-$consecutive
middle spins in the Motzkin state is 
\begin{eqnarray}
S_{L}\equiv-\mbox{Tr}\left(\rho_{B}\log\rho_{B}\right) & \approx & -\frac{\int_{-\infty}^{+\infty}d\rho\mbox{ }\exp\left(-\frac{3}{4}\rho^{2}\right)\mbox{ }\log\left[\frac{\exp\left(-\frac{3}{4}\rho^{2}\right)}{\sqrt{L}\int_{-\infty}^{+\infty}d\rho\exp\left(-\frac{3}{4}\rho^{2}\right)}\right]}{\int_{-\infty}^{+\infty}d\rho\exp\left(-\frac{3}{4}\rho^{2}\right)}\quad.\label{eq:S_definition}
\end{eqnarray}
$\sqrt{L}$ inside the log already gives the logarithmic scaling of
$S_{L}$ with $L$ and $\int_{-\infty}^{+\infty}d\rho\exp\left(-\frac{3}{4}\rho^{2}\right)=2\sqrt{\frac{\pi}{3}}$.

Comment: The summands are even function and simple application of the integral test shows that the summation is well approximated by the integrals.

So we have in the limit that  $1\ll L\ll n$
\begin{eqnarray}
S_{L} & \approx & \frac{1}{2}\log L+\log\left(2\sqrt{\frac{\pi}{3}}\right)+\frac{1}{2}\quad\mbox{nats}.\label{eq:Entropy_Final}\\
 & = & \frac{1}{2}\log_{2}L+\log_{2}\left(2\sqrt{\frac{\pi}{3}}\right)+\frac{1}{2}\log_{2}e\quad\mbox{bits .}\nonumber 
\end{eqnarray}

Comment: This formula gives good agreements with DMRG calculations
to be presented elsewhere \cite{AdrianChat}.

The formula derived above (Eq. \eqref{eq:Entropy_Final}) is derived
for $L$ consecutive spins centered on a chain of length $2n\gg L$.
However, we believe the same scaling would hold in general: 
\begin{conjecture}
The entanglement entropy of any $L\ll2n$ consecutive block of spins
(not necessarily in the middle), to the leading order, scales as $\log L$. 
\end{conjecture}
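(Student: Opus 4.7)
The plan is to generalize the tripartite decomposition of Section~\ref{sub:Block-Entanglement-Entropy} to a block of $L$ consecutive spins at an arbitrary location, and to establish matching upper and lower bounds of order $\log L$. Let $B$ consist of sites $a+1,\dots,a+L$, so $A$ is the first $a$ sites and $C$ the last $c = 2n-a-L$ sites; by the left-right symmetry of $|{\cal M}_{2n}\rangle$ (chain reflection combined with the swap $u\leftrightarrow d$), we may assume $a\le c$. Grouping walks by their heights $m_1$ and $m_2$ at the two interfaces of $B$ exactly as in Eq.~\eqref{eq:MotzkinBlockEntanglementEntropy} gives
$$|{\cal M}_{2n}\rangle = \frac{1}{\sqrt{N}}\sum_{m_1,m_2}\sum_{x,y,z} |w_{a,0,m_1}^x\rangle_A\otimes|w_{L,m_1,m_2}^y\rangle_B\otimes|w_{c,m_2,0}^z\rangle_C.$$
Walks ending at different heights have different spin sequences and are orthogonal, so tracing out $A$ and $C$ yields
$$\rho_B = \frac{1}{N}\sum_{m_1,m_2} M_{a,0,m_1}\,M_{L,m_1,m_2}\,M_{c,m_2,0}\;|C_{L,m_1,m_2}\rangle\langle C_{L,m_1,m_2}|.$$
Because walks with distinct $p\equiv m_2-m_1$ have distinct net up-minus-down count, $\rho_B$ is block-diagonal in $p$: $\rho_B=\sum_p q_p\,\rho_B^{(p)}$ with mutually orthogonal supports, where $q_p$ is the total weight of the $p$-block and $\rho_B^{(p)}$ is normalized.

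For the upper bound, observe that within a fixed $p$-block the states $|C_{L,m_1,m_1+p}\rangle$ as $m_1$ varies are uniform superpositions over the \emph{nested} sets of step sequences whose minimum partial sum is $\ge -m_1$. Since there are only $L+1$ possible values of this minimum, the span of $\{|C_{L,m_1,m_1+p}\rangle\}_{m_1}$ has dimension at most $L+1$, so $\rho_B^{(p)}$ has rank at most $L+1$. Summing over the $O(L)$ allowed values of $p$ gives total rank $O(L^2)$ and the coarse upper bound $S_L\le 2\log L + O(1)$.

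For a matching lower bound, use that mixtures with orthogonal supports have additive entropy:
$$S(\rho_B) = H(q) + \sum_p q_p\,S(\rho_B^{(p)}) \;\ge\; H(q),$$
where $q_p\propto \sum_{m_1} M_{a,0,m_1}\,M_{L,m_1,m_1+p}\,M_{c,m_1+p,0}$. The key observation is that the $p$-dependence of $q_p$ is controlled by the Gaussian fluctuation of a walk on $L$ steps, which has standard deviation $\sim\sqrt{L}$ in all regimes: in the bulk case $a,c\gtrsim L$, Lemma~\ref{lem:IgnoreBadWalks} lets us replace $M_{L,m_1,m_1+p}$ by the unconstrained $M_{L,p}\propto e^{-3p^2/(4L)}$ from Eq.~\eqref{eq:Intermediate3}; the $m_1$-sum factors out and $q_p\propto e^{-3p^2/(4L)}$ gives $H(q)=\tfrac{1}{2}\log L+O(1)$. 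In the boundary case $a=0$, the $m_1$-sum collapses to $q_p\propto p(p+1)\,e^{-3p^2/(4L)}$ (still a Gaussian envelope of width $\sqrt L$), also yielding $H(q)=\tfrac{1}{2}\log L + O(1)$; equivalently, this regime reduces to the bipartite cut formula Eq.~\eqref{eq:Scut} with $n_1=L$. In the transitional regime $a\sim L$, a direct saddle-point analysis of the sum over $m_1$ (paralleling Subsection~\ref{sub:Two-point-functions}) produces a $q_p$ that is again a perturbation of a Gaussian of width $\sqrt L$. In every regime therefore $H(q)=\Theta(\log L)$.

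Together these bounds give $S_L=\Theta(\log L)$, confirming the conjecture. The principal technical obstacle is sharpening the upper-bound constant from $2$ to $\tfrac{1}{2}$: this would require showing $\sum_p q_p\,S(\rho_B^{(p)})=O(1)$ uniformly in the placement of the block. I expect this to follow because, after integrating $M_{a,0,m_1}M_{c,m_1+p,0}$ against the nested-superposition structure of the $|C_{L,m_1,m_1+p}\rangle$, the resulting operator on each $p$-block has one dominant eigenvector (aligned with the bulk $|C_{L,p}\rangle$) with an exponentially suppressed tail; but making this estimate uniform across the transitional regime $a\sim L$, where neither the bulk Gaussian factorization nor the boundary cut formula is clean, is the technical core of a full proof.
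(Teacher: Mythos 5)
First, note that the paper does not prove this statement: it appears only as a Conjecture, with the centered case worked out in Sec.~\ref{sub:Block-Entanglement-Entropy}, so there is no proof of record to compare yours against and I am assessing your argument on its own terms. The skeleton is sound and identifies the right structure for the off-center case: the block-diagonality of $\rho_B$ in $p=m_{2}-m_{1}$ is genuine (states of $B$ with different net block magnetization $\sum_{j\in B}s_{j}^{z}=p$ are orthogonal), the nestedness argument bounding the rank of each $p$-block by $L+1$ is correct and yields a valid upper bound $S_{L}\le 2\log L+O(1)$, and $S(\rho_B)\ge H(q)$ for a mixture with mutually orthogonal supports is standard. One caution: your displayed formula for $\rho_B$ is a correct decomposition into rank-one terms but not a spectral decomposition, since the vectors $|C_{L,m_{1},m_{1}+p}\rangle$ for different $m_{1}$ at fixed $p$ overlap; you evidently know this, but the notation invites misreading.

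There remain two genuine gaps. (i) The lower bound $H(q)=\Theta(\log L)$ needs more than ``the width of $q_{p}$ is $\sim\sqrt{L}$'': a distribution can have standard deviation $\sqrt{L}$ and still have $O(1)$ entropy if it carries a heavy atom. What you actually need is an anticoncentration estimate $\max_{p}q_{p}=O(L^{-1/2})$ holding uniformly in the placement $a$. In the bulk ($a\gg L$, via Lemma~\ref{lem:IgnoreBadWalks} and Eq.~\eqref{eq:Intermediate3}) and at the boundary ($a=O(1)$, consistent with Eq.~\eqref{eq:Scut} at $n_{1}=L$) your explicit Gaussian forms deliver this, but in the transitional window $a\sim L$ neither approximation applies and the bound is asserted rather than derived; you have correctly located the hard regime but not closed it. (ii) Your bounds do not match: you obtain $\tfrac{1}{2}\log L\lesssim S_{L}\lesssim 2\log L$. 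If the conjecture is read as $S_{L}=\Theta(\log L)$ this suffices once (i) is repaired; if, as the centered-block result Eq.~\eqref{eq:Entropy_Final} suggests, the intended claim is a leading coefficient $\tfrac{1}{2}$, you additionally need $\sum_{p}q_{p}\,S(\rho_B^{(p)})=O(1)$, i.e., that each $p$-block, though of rank up to $L+1$, contributes only $O(1)$ entropy; this is plausible but unproven precisely where (i) is also open. Net assessment: a good and essentially correct program for establishing $\Theta(\log L)$, but not yet a proof of the conjecture.
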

Next we calculate the R\'enyi entropy ($\kappa>0$) 
\begin{equation}
S_{L}^{\kappa}\equiv\frac{1}{1-\kappa}\log\left[\mbox{Tr}\left(\rho_{L}^{\kappa}\right)\right].\label{eq:Reyni_Block}
\end{equation}
A very similar calculation as above gives 
\begin{eqnarray}
S_{L}^{\kappa}=\frac{1}{1-\kappa}\log\sum_{p=-L}^{+L}\lambda_{p}^{\kappa} & \approx & \frac{1}{2}\log(L)+\log\left(2\sqrt{\frac{\pi}{3}}\right)-\frac{\log(\kappa)}{2(1-\kappa)}\quad\mbox{nats}\label{eq:SL_Renyi}\\
 & = & \frac{1}{2}\log_{2}(L)+\log_{2}\left(2\sqrt{\frac{\pi}{3}}\right)-\frac{\log_{2}(\kappa)}{2(1-\kappa)}\quad\mbox{bits}.\nonumber 
\end{eqnarray}

Comments: One can verify that $\lim_{\kappa\rightarrow1}S_{L}^{\kappa}=S_{L}$
as expected. Inside the Table in Sec. \ref{sec:Context-and-summary}
we defined 
\[
g(\kappa)\equiv\log\left(2\sqrt{\frac{\pi}{3}}\right)-\frac{\log(\kappa)}{2(1-\kappa)}.
\]
\begin{rem}
The R\'enyi entropies (Eqs. \eqref{eq:Scut_Renyi} and \eqref{eq:SL_Renyi})
depend on $\kappa$ only in the correction terms. This feature is
shared by the AKLT model as well \cite{fan2004entanglement,korepin2010entanglement}.
However, the logarithmic divergence as $\kappa\rightarrow0$ is a
new feature of this model.
\end{rem}
\begin{rem}
Since we have diagonalized the reduced density matrix, we can identify
the Entanglement Hamiltonian defined by \cite{li2008entanglement}
$\rho_{B}\equiv\exp\left(-\frac{H_{L}}{T}\right)$ and we have using
Eq. \eqref{eq:Lambda} and subtracting the overall constant 
\begin{eqnarray}
H_{L} & = & T\sum_{p=-L}^{L}E_{p}\mbox{ }|C_{L,p}\rangle\langle C_{L,p}|\quad.\label{eq:H_L}\\
E_{p} & \equiv & \frac{3}{4}\frac{p^{2}}{L}\quad.\nonumber 
\end{eqnarray}
\end{rem}

\section{Conclusions and future work}
In recent years the interplay between condensed matter physics and
quantum information theory has been quite fruitful. The model proposed
in \cite{Movassagh2012_brackets} is a new exactly solvable model
in physics that owes much of its novelty to ideas and techniques of
quantum information theory as well as other areas of mathematics and
computer science. For example, it utilizes the theory of Brownian
motions and random walks, fractional matching technique in computer
science, perturbation theory and asymptotic analysis. The model is
exactly solvable in the sense that the exact ground state wave-function
is known analytically and that we understand the gap scaling. Such
physical and new models are hard to come by and are valuable for they
teach us new physics of quantum systems. 

This model has unusual properties different from the AKLT and other
such exactly solvable models. On the one hand it has a unique yet
highly entangled ground state, which we nevertheless can analytically
write down and solve (compute entropies, correlations etc.). On the
other hand, in the thermodynamical limit the expected magnetization
in the $z-$direction is a small residue propagated from the boundaries
and is essentially zero in the bulk and the two-point correlation
function in the $z-$direction also vanishes. Moreover, the expected
magnetizations in $x$ and $y$ directions are zero; the model does
not have anisotropies. It seems like the underlying random walks that
describes the ground state are simultaneously responsible for the
long range entanglement and the killing of long range spin correlations. 

Often exactly solvable models such as the AKLT have explicit analytical
expression for the ground state because they are gapped \cite{AKLT}.
Existence of a gap in one-dimension ensures a constant upper bound
on the entropy of entanglement (i.e., rigorously established by the
area law in one-dimension \cite{Matth_areal}). In the more general
case, when the ground state is unique but the gap vanishes in the
thermodynamical limit, it is expected that the area-law conjecture
holds with a possible logarithmic correction. That is as long as the
ground state is unique, and the Hamiltonian is local and translationally
invariant, one expect that the area-law would be violated by at most
a logarithmic factor. This is based mostly on theoretical results
on 1 + 1 dimensional CFTs, as well as, in the Fermi liquid theory
\cite{Korepin2004,holzhey1994geometric,Cardy2009}. The model presented
in \cite{Movassagh2012_brackets} and advanced here, is gapless and
also has a ground state entanglement entropy that scales logarithmically,
but provably does \textit{not} have a CFT in the limit \cite{movassagh2016supercritical}. 

The class of generalized models presented in \cite{movassagh2016supercritical}
is exactly solvable in the same sense as above but are yet much more
surprising. They are integer spin$-s$ quantum spin chains, where
in addition to retaining locality, uniqueness of ground state, and
translational invariance in the bulk, they are very highly entangled:
The half-chain entanglement entropy scales as $\sqrt{n}$ for all
integer spin $s>1$. These models are quite surprising and serve as
counter-examples to the belief that under the constraints imposed
on the interaction and its kernel, logarithmic scaling would be the
maximum violation of the area law. In a future work we will extend
the result herein to compute entanglement and correlation of the generalized
models \cite{movassagh2016supercritical}.\\

There are a number of open problems whose resolution would advance
our understanding of this model and its physics. Such problems include:
\begin{enumerate}
\item Expressing the Hamiltonian of the generalized model presented in \cite{movassagh2016supercritical}
in standard spin representation.
\item Calculation of $\langle s_{n_{1}}^{x}s_{n_{2}}^{x}\rangle$, and $\langle s_{n_{1}}^{x}s_{n_{2}}^{y}\rangle$.
\item Calculation of multipoint correlation functions, e.g., $3-$point
function $\langle{\cal M}_{2n}|s_{n_{1}}^{z}s_{n_{2}}^{z}s_{n_{3}}^{z}|{\cal M}_{2n}\rangle$.
\item Numerical or analytical computation of the time-dependent correlation
functions. Such a $2-$point function is defined by 
\[
\langle{\cal M}_{2n}|s(t)_{n-\frac{L}{2}}^{z}\mbox{ }s(0)_{n+\frac{L}{2}}^{z}|{\cal M}_{2n}\rangle=\langle{\cal M}_{2n}|e^{-iHt}s_{n-\frac{L}{2}}^{z}e^{iHt}\mbox{ }s_{n+\frac{L}{2}}^{z}|{\cal M}_{2n}\rangle.
\]
\item Computation of the dispersion relation, as well as, the low lying
states and energies.
\item The continuum limit of this model.
\item Extending the current results to the case where the boundary projectors
are removed and an external field is added. This modification has
been outlined in \cite{movassagh2016supercritical}.
\end{enumerate}
Any theoretical or numerical work in these directions would be quite
desirable.
\section{Acknowledgements}
I thank Vladimir Korepin for fruitful discussions. I also
thank Adrian Feiguin and Jean-François Marckert for discussions and
Adrian Feiguin for DMRG verifications of the results. I am grateful
for the support and freedom provided by the Herman Goldstine fellowship
in mathematical sciences at IBM TJ Watson Research Center. I thank
the Simons Center for Geometry and Physics for having hosted me during
the workshop of Statistical mechanics and combinatorics (winter of
2016). I thank the Simons Foundation and the American Mathematical
Society for the AMS-Simons travel grant.

\bibliographystyle{plain}
\bibliography{mybib}

\end{document}